\documentclass[12pt]{article}
\usepackage{amssymb,amsmath,amsfonts,eurosym,geometry,graphicx,caption,color,setspace,sectsty,comment,footmisc,caption,natbib,pdflscape,subfigure,array,hyperref}
\usepackage{lmodern}
\usepackage{booktabs}
\usepackage[T1]{fontenc}
\usepackage{hyperref}
\usepackage{subcaption}

\usepackage{setspace}
\AtBeginDocument{
  \setlength{\abovedisplayskip}{5pt}
  \setlength{\belowdisplayskip}{5pt}
  \setlength{\abovedisplayshortskip}{3pt}
  \setlength{\belowdisplayshortskip}{3pt}
}

\usepackage{makeidx}
\usepackage[dvipsnames]{xcolor}
\usepackage{xcolor}
\usepackage{amsfonts}
\usepackage{amssymb}
\usepackage{amsmath}
\usepackage{tabularx}
\usepackage{float}      
\usepackage{subcaption} 
\usepackage{graphicx}   
\usepackage{multirow}
\usepackage{amstext}
\usepackage{caption}
\usepackage[singlelinecheck=false,labelsep=space,justification=centering]{caption}
\usepackage[flushleft]{threeparttable}
\usepackage{url}

\usepackage{natbib}
\definecolor{mycolor}{RGB}{0,0,128}
\hypersetup{
    colorlinks=true,
    allcolors=mycolor,  
}
\usepackage[title]{appendix}
\usepackage{comment}

\renewcommand{\arraystretch}{.5}

\advance \topmargin by -\headheight
\advance \topmargin by -\headsep
\evensidemargin \oddsidemargin
\let\tilde=\widetilde

\newtheorem{assumption}{Assumption}
\newtheorem{proposition}{Proposition}

\newtheorem{corollary}{Corollary}

\renewcommand{\arraystretch}{1.5}

\providecommand\abstractname{Abstract}
\def\abstract{}

\renewenvironment{abstract}{  \centering\small
  \textbf\abstractname
  \list{}{\leftmargin0.07in \rightmargin\leftmargin}
  \item\relax
}{  \endlist \par\bigskip
}
\makeatletter
\def\@biblabel#1{\hspace*{-\labelsep}}
\makeatother

\newtheorem{remark}{Remark}

\AtBeginDocument{
  \setlength{\topsep}{3pt}
  \setlength{\partopsep}{0pt}
  \setlength{\parskip}{0pt}
}

\begin{document}

\title{Regime-Switching Models for Disaggregated Data\thanks{{This paper is based on the first chapter of Anlong Qin's PhD dissertation at Boston University (Qin, 2022). We thank Pierre Perron, Hiro
Kaido, Ivan Fernandez-Val, Jean-Jacques Forneron for their comments and
suggestions.}}}
\author{
\begin{tabular}{c}
Anlong Qin\footnote{Institute of Finance \& AI, Ningbo University of Finance and Economics, No. 899, Xueyuan Road, Zhejiang, China; qinanlong@nbufe.edu.cn.} \\
[-1.5em]
Ningbo University of Finance and Economics
\end{tabular}
\\[1em]
\begin{tabular}{c}
Zhongjun Qu\footnote{Department of Economics, Boston University, 270 Bay State Road, Boston, MA 02215 USA; qu@bu.edu.} \\
[-1.5em]
Boston University
\end{tabular}
}
\date{\today }
\maketitle

\begin{abstract}
\baselineskip=21.0ptWe show analytically and via simulation that cross-sectional aggregation can substantially attenuate regime-switching signals in time-series data, making regime switches harder to detect. Building on this, we develop regime-switching models and an estimation algorithm which allow for autoregressive dynamics and grouped heterogeneity. We apply the approach to a U.S. macroeconomic dataset of 94 series, covering components of real gross domestic product, industrial production, capacity utilization, employment, and hours worked. The estimates give sharper business cycle classifications than those typically found in the literature. Monte Carlo simulations show that the computation is practical for datasets with a few hundred time series. 

\vspace{5mm}
\textbf{Keywords}: Aggregation, Regime switching, Large-dimensional data, Business cycle, MCMC.

\vspace{3mm}
\textbf{JEL codes}: C11, C32, C55, E32.
\end{abstract}


\thispagestyle{empty}\setcounter{page}{0}

\section{Introduction}

Regime-switching models have become a common framework for capturing nonlinear dynamics in macroeconomic and financial data since \cite{hamilton1989}.\footnote{See \cite{Ang2012} and \cite{hamilton2016} for surveys.} Most applications fit such models to just one series. For example, \cite{hamilton1989} and  \cite{hamilton2006} studied U.S. aggregate output; \cite{masson2001} examined exchange rates; \cite{hamilton1988}, \cite{hamilton2005}, and \cite{Ang2002b} modeled interest rates; \cite{davig2004} analyzed the debt-output ratio; and \cite{Ang2002a} considered equity returns. A small number of studies looked at small-scale multivariate settings: see \cite{diebold1996}, \cite{kim1998}, \cite{chauvet1998}, and \cite{hamilton2006}, who modeled sales, income, employment, and industrial production using a latent factor with regime switching; \cite{Chauvet2002}, who applied a similar model to U.S. unemployment rates across seven age groups; and \cite{krolzig2000}, who estimated a Markov-switching vector error correction model, for six industrial production indices for U.K. business cycles. 

Even for datasets with many series, most analyses treat series separately (e.g., \cite{OwyangPigerWall2005}) rather than formulate a joint model. Only a few papers have studied high-dimensional models: \cite{LiuChen2016} propose a factor model with regime-dependent means, loadings, and covariance structure, estimated via eigenanalysis and the Viterbi algorithm. \cite{UrgaWang2024} develop a quasi-maximum likelihood estimator for factor models with regime-switching loadings, combining an EM algorithm with principal component analysis. \cite{BarigozziMassacci2025} introduce an approximate factor model with regime-switching loadings driven by a latent Markov process, estimated by principal components followed by an EM step.  In a different approach, \cite{HamiltonOwyang2012} develop a Bayesian latent-class model for state-level recessions, modeling recession probabilities through logistic functions tied to national and regional clusters; their results show cross-state differences in recession timing but little evidence that states avoid national downturns. These contributions broaden the scope of regime-switching analysis and suggest potential benefits can come from modeling many series simultaneously.

This study considers regime switching in high-dimensional settings, with a different motivation from prior work. Our starting point is the observation that many macroeconomic variables are built from disaggregated components. For example, GDP combines personal consumption expenditures, business investment, government spending, and imports and  exports, each containing subcomponents.\footnote{Source: U.S. Bureau of Economic Analysis, via
\href{https://fred.stlouisfed.org/release/tables?rid=53\&eid=13498\#snid=13511}
{Federal Reserve Economic Data (FRED)}, Federal Reserve Bank of St.\ Louis.
Percent Change From Preceding Period in Real Gross Domestic Product, Expanded Detail. } The same applies to industrial production, capacity utilization, employment, and hours worked, among many others. In practice, one can work with either aggregate or disaggregate data when modeling regime switching. This raises the question of whether disaggregate data can better identify regime switching than the corresponding aggregates. We study this question analytically and through simulations; both suggest that disaggregate data can lead to sharper identification, with gains increasing in the heterogeneity of signal-to-noise ratios across series. Given the prevalence of aggregation in economic data, this suggests that models and estimation methods designed for disaggregated series can play a useful role and should not be overlooked in practice.

To illustrate this point empirically, we apply a Hamilton (1989) style model to U.S. quarterly real GDP growth from 1972:II to 2019:III. The sample is restricted to the pre-COVID period, so that the pandemic does not dominate the comparison. Let $Y_t$ denote GDP growth. We estimate
\begin{equation}
\label{eq:MS_agg}
Y_t = \delta_1 \mathbf{1}_{\{S_t=1\}} + \mu_2 + e_t,
\end{equation}
where $S_t$ follows a two-state Markov chain, with $S_t=1$ and $S_t=2$ being recession and expansion, respectively; $\mu_2$ is the expansion mean;  $\delta_1$ is the difference between the recession and expansion means; and $e_t \sim \text{i.i.d. } N(0,\sigma_1^2)$ if $S_t=1$ and $e_t \sim \text{i.i.d. } N(0,\sigma_2^2)$ if $S_t=2$. Figure~\ref{fig:Figure1}(a) shows the smoothed recession probabilities, with NBER-defined recessions shown as shaded bars. The recession probabilities are broadly consistent with the NBER chronology; however, the turning points are often not sharply identified, and the model essentially misses the recession in the early 2000s. These patterns also appear in the recession probabilities published by the St.\ Louis Fed, reproduced in Appendix Figure \ref{fig:Fred_recession} for reference.\footnote{Chauvet, Marcelle, and Jeremy M.\ Piger, Smoothed U.S. Recession Probabilities \href{https://fred.stlouisfed.org/series/RECPROUSM156N}{[RECPROUSM156N]}, retrieved from Federal Reserve Economic Data (FRED), Federal Reserve Bank of St.\ Louis, December~31,~2025.} Next, we fit the same model to the components of GDP to see whether there are gains from using disaggregate data. We consider a system of 30 equations, each for a component of GDP (excluding government spending because it is not procyclical):
\begin{equation}
\label{eq:MS}
Y_{i,t}
=
\delta_{i,1}\,\mathbf{1}_{\{S_t=1\}}
+
\mu_{i,2}
+
e_{i,t},
\end{equation}
where $S_t=1$ and $S_t=2$ are as before; $e_{i,t} \sim \text{i.i.d. } N(0,\sigma_{i,1}^2)$ if $S_t=1$ and $e_{i,t} \sim \text{i.i.d. } N(0,\sigma_{i,2}^2)$ if $S_t=2$. The errors $e_{i,t}$ are specified as cross-sectionally independent. Figure~\ref{fig:Figure1}(b) reports the smoothed recession probabilities. All NBER recessions, including the one in the early 2000s, are identified, and the turning points are sharper than in the aggregate case. Since the two models share the same functional form and are estimated using the same sampling algorithm and prior, the difference in results is likely due to disaggregation. In Section~\ref{sec:application} we show that still sharper results can be obtained by incorporating additional disaggregated time series in the estimation.

\begin{figure}[H]
    \centering
    
    \caption{Inference on recessions using U.S.\ GDP or its components}
    \label{fig:Figure1}
    \begin{minipage}[t]{0.5\textwidth}
        \centering
        {\small (a) Based on GDP}\\
        \includegraphics[width=\linewidth, clip, trim=1.7cm 9.5cm 1.2cm 10cm]{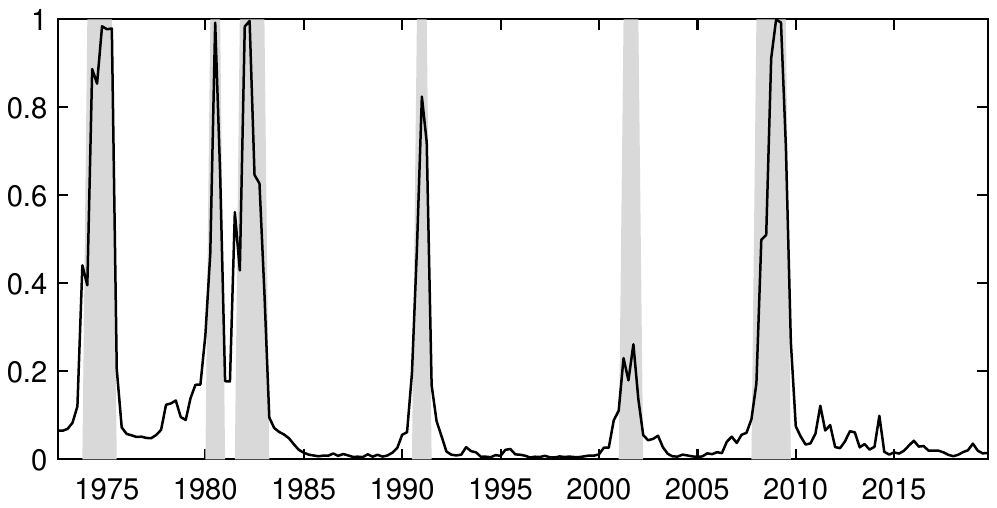}
    \end{minipage}%
    \begin{minipage}[t]{0.5\textwidth}
        \centering
        {\small (b) Based on GDP components}\\
        \includegraphics[width=\linewidth, clip, trim=1.7cm 9.5cm 1.2cm 10cm]{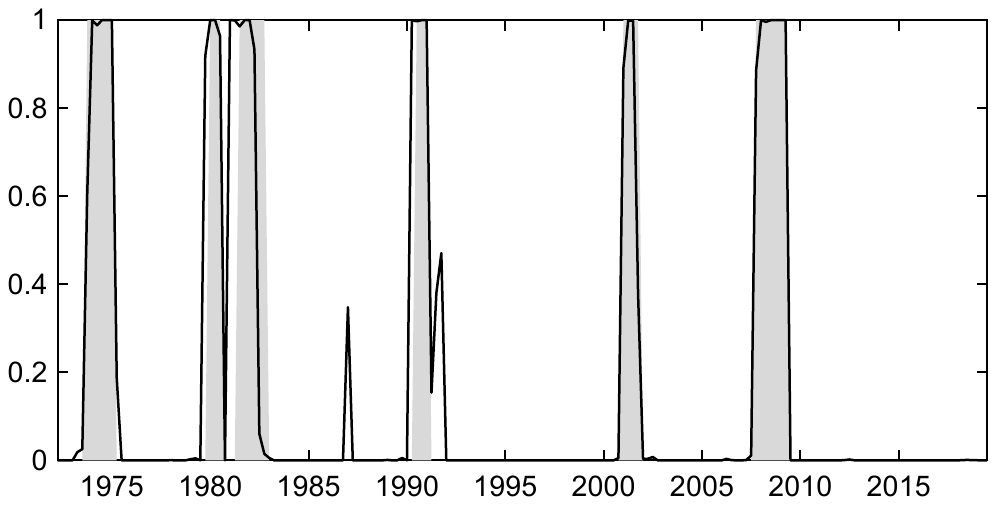}
    \end{minipage}
    \vspace{-0.4cm}
\begin{minipage}{\linewidth}
\singlespacing
\footnotesize{ \textit{Note:} The aggregate GDP series is modeled as in (\ref{eq:MS_agg}), and each of the 30 GDP components as in (\ref{eq:MS}). 
Smoothed probabilities are obtained by averaging 0.4 million MCMC draws. 
NBER recession dates are shown as shaded regions. 
Priors are described in Section~3.5 and are the same across applications.}
\end{minipage} 
    \end{figure}

The above comparison shows that disaggregated data can improve regime detection even without changing the model in an essential way. The reason is that the components of GDP differ in their signal-to-noise ratios with respect to regime switching. Aggregation ignores this heterogeneity and fails to downweight noisier series, while disaggregated data allow the likelihood to adapt accordingly. In Section~\ref{sec:effects of aggregation}, using a simple two-variable example, we show analytically that aggregation preserves all information about regime switching if and only if the signal-to-noise ratios are equal across the two series. When they differ, as in the present application, the information loss can be substantial. We also quantify the gains through simulations.

Our proposed models are generalizations of \eqref{eq:MS} that allow for autoregressive dynamics and grouped heterogeneity. In a typical application, the model contains hundreds of parameters, and standard maximum likelihood often fails to converge. To address this, we develop a Gibbs sampling algorithm for estimation, inference, and forecasting, which incorporates a new multi-step sampler for latent state variables. We show that the algorithm can handle large macroeconomic datasets (several hundred time series) in reasonable computation time.

We apply the models to a dataset of 94 disaggregated series, covering components of real GDP, industrial production, capacity utilization, employment, and hours worked. The models detect all NBER-defined recessions over 1972-2024. The smoothed recession probabilities are close to being binary, taking values near zero or one with few intermediate values. This contrasts with typical estimates in the literature, such as those in Figure \ref{fig:Fred_recession}, which tend to linger at intermediate values for extended periods. The estimated recession starting dates either coincide exactly with the NBER dates, or in a few cases, lag them by one to three quarters. Meanwhile, the estimated recession ending dates closely track the starts of NBER expansions. This asymmetry may reflect the nature of business cycles: downturns build gradually across sectors, while recoveries tend to generate a more synchronized signal across disaggregated series. These findings suggest that methods designed around a wide set of disaggregated variables can be useful for business cycle research, particularly for accurate regime identification.

The issue of cross-sectional aggregation has been studied mainly for linear models such as ARMA models and for forecasting; see, e.g., \cite{lutkepohl1987}, \cite{Granger1988}, \cite{Luetkepohl2009}. To our knowledge, this is the first formal analysis of aggregation in regime-switching models. Here, the object of interest is a common regime component shared across series rather than a linear forecast. Our results suggest that disaggregation gives clear gains in this setting. We conjecture that the same insight extends to other settings where a shared latent component is of interest, such as nonlinear factor models or models with common stochastic volatility. In those cases, working with less aggregated series may produce similar gains.

The paper is organized as follows. Section 2 introduces the models and examines gains from disaggregation. Section 3 presents a Gibbs sampler for estimation, filtering, and forecasting. Section 4 studies the methods through simulations. Section 5 presents an empirical application, and Section 6 concludes. Additional figures and details are available in the online appendix.

\section{Proposed models}

We consider $N$ time series partitioned into $J$ groups. Denote the series in group $j$ by
\[
y_{j,t} = (y_{j,1,t}, y_{j,2,t}, \ldots, y_{j,N_j,t})^{\prime},
\]
where $N_j$ is the number of series in group $j$, and
$N = \sum_{j=1}^{J} N_j$.  If each group contains only a single series, then $J = N$. The full data vector at time $t$ (for $t = 1,\ldots,T$) is
\begin{equation}
y_t = (y_{1,t}^{\prime}, y_{2,t}^{\prime}, \ldots, y_{J,t}^{\prime})^{\prime}.
\label{dis}
\end{equation}
Let \(
Y_{j,t} = \sum_{i=1}^{N_j} y_{j,i,t}
\) be the sum of the
series in group $j$. The vector of aggregated series at $t$ is
\begin{equation}
Y_t = (Y_{1,t}, Y_{2,t}, \ldots, Y_{J,t})^{\prime}.
\label{agg}
\end{equation}
Our focus is on the disaggregate series in
\eqref{dis}; the aggregate series in \eqref{agg} are included for comparison. Throughout, uppercase letters denote aggregate
variables.


We propose two models. Each consists of a common regime-switching component and an autoregressive component that captures additional dynamics. Some coefficients are equal within each group  to capture grouped heterogeneity. Both the mean and variance can switch.

The first model is a generalization of \cite{hamilton1989}  to
the disaggregated setting:
\begin{equation}
\text{\bf{\textsc{Model A}}: \qquad }\;
y_{j,i,t}
=
\mu_{j,i,S_t}
+
\sum_{m=1}^{k}
\phi_{j,m}
(y_{j,i,t-m}-\mu_{j,i,S_{t-m}})
+
e_{j,i,t},
\label{eq:2.1}
\end{equation}
where $(j,i,t)$ indexes the $i$-th unit in the $j$-th group at
time $t$; $k$ is the lag order; and $(S_{t-k},\ldots,S_t)$ is a vector of
 latent variables determining the state of the system at $t$.
In general, $S_t$ may take values in a finite state space
$S_t\in\mathcal{S}=\{1,2,\ldots,S\}$, with known $S$; in the leading case,
$S_t$ is binary, representing recession and expansion. We assume $S_t$
follows a Markov process with transition probabilities $p_{ss'}$ for any
$s,s'\in\mathcal{S}$. The errors are assumed to be normally distributed:
\[
e_{j,i,t}\sim N(0,\sigma_{j,i,S_t}^2).
\]
The unknown parameters consists of the
group-level autoregressive coefficients $\phi_{j,m}$, the regime-dependent
means $\mu_{j,i,S_{t}}$, and the regime-dependent variances
$\sigma_{j,i,S_t}^2$. 

A known feature of regime-switching models is label switching:
without an ordering restriction, the regimes are not identified. We impose the following ordering restriction on the means:
\[
\mu_{j,i,1} \le \mu_{j,i,2} \le \cdots \le \mu_{j,i,S},
\qquad \text{for any }  j \text{ and } i .
\]
Because $S_t$ is a scalar process, restricting the
mean of a single series is in fact sufficient for identification; e.g., one could require the first series in the
first group to satisfy
$\mu_{1,1,1} \le \mu_{1,1,2} \le \cdots \le \mu_{1,1,S}$. In our applications,
however, we find that imposing the restriction for all series improves computational performance. For variables such as
the unemployment rate, whose values are higher during recessions, we multiply
by $-1$ so that they satisfy these inequalities.

In the two-regime case, it is convenient to write
\[
\mu_{j,i,S_t}
=
\delta_{j,i,1}\mathbf{1}_{\{S_t=1\}}+\mu_{j,i,2},
\qquad
\delta_{j,i,1}\le 0 \ \text{for all } i,j .
\]
Under this parameterization, $\mu_{j,i,2}$ is unrestricted, and each ordering restriction involves a single coefficient rather than two. The reduction is particularly useful in high-dimensional settings, as it removes $N$ parameters from the set of coefficients subject to restrictions during estimation.

The aggregate model implied by Model A is
\begin{equation}
Y_{j,t}
=
\mu_{j,S_t}
+
\sum_{m=1}^{k}
\phi_{j,m}(Y_{j,t-m}-\mu_{j,S_{t-m}})
+
e_{j,t},
\label{eq:2.2}
\end{equation}
where $e_{j,t}\sim N(0,\sigma_{j,S_t}^2$),
\(
\mu_{j,S_t}
=\sum_{i=1}^{N_j}\mu_{j,i,S_t}$, and
$\sigma_{j,S_t}^2
=\sum_{i=1}^{N_j}\sigma_{j,i,S_t}^2.
\)

\begin{remark}
The model allows the variance to switch so that it can accommodate both mild and severe recessions, such as those of 2001 and 2008. Without this flexibility, it can become difficult to separate mild recessions from expansions when severe recessions appear in the sample. To see why, suppose the variance is restricted to be constant across regimes. The likelihood of the recession regime is then relatively peaked (i.e., has a low variance) because the variance is determined largely by expansion observations, which constitute the vast majority of the sample. As a result, a single large negative observation can pull the recession mean toward an extreme value, leaving mild recessions closer to the expansion mean and harder to classify. Allowing the variance to switch lets the recession regime variance take on a larger value, flattening its likelihood and reducing the influence of any single observation on the recession mean. This preserves the separation between mild recessions and expansions.
\end{remark}

\begin{remark}
The AR coefficients are assumed to be equal within each group, which reduces the number of parameters relative to letting them vary freely. Beyond parsimony, this assumption is motivated by the fact that a Markov chain has an AR(1) representation with heteroskedastic errors \citep{hamilton1994}. Therefore, giving the AR component too much flexibility may lead to weak regime identification. The restriction also implies that aggregated series can be obtained by summing disaggregated ones within each group, which facilitates the theoretical analysis.
\end{remark}

The second model generalizes the univariate models in \cite{hansen1992}, \cite{Cho2007}, and \cite{qu2021}:
\begin{equation}
\text{\bf{\textsc{Model B}}: \qquad}\;
y_{j,i,t}
=
\mu_{j,i,S_t}
+
\sum_{m=1}^{k}
\phi_{j,m} y_{j,i,t-m}
+
e_{j,i,t}.
\label{eq:2.3}
\end{equation}
The error terms $e_{j,i,t}$, the latent state $S_{t}$, and the parameters ($\phi_{j,m}$, $\mu_{j,i,S_{t}}$, $\sigma_{j,i,S_t}^2$) are
the same as in Model A. The corresponding aggregate model is
\begin{equation}
Y_{j,t}
=
\mu_{j,S_t}
+
\sum_{m=1}^{k}
\phi_{j,m} Y_{j,t-m}
+
e_{j,t},
\label{eq:2.4}
\end{equation}
where $\mu_{j,S_t}$, $\phi_{j,m}$, and $e_{j,t}$ are defined as before. 

Models A and B share similar second-order dynamics but are not observationally equivalent. For example, when $k=1$, both models have an ARMA(1,1) representation via the AR(1) representation of the latent Markov chain. They differ in how the series reacts to regime changes. Model A implies an immediate jump toward the new regime, while Model B implies a more gradual transition as the lagged level $y_{t-1}$ carries information from the previous regime. As a result, they imply different higher-order moments, as in the univariate setting.

Models A and B are designed for situations where average growth rates determine regimes. In other settings, regimes may depend on other parameters, such as a set of regression coefficients. Model~B can be generalized to such applications. For example, one may consider
\[
y_{j,i,t}
=
x_{j,i,t}'\beta_{j,i}
+
z_{j,i,t}'\delta_{j,i,S_t}
+
e_{j,i,t},
\]
where $x_{j,i,t}$ and $z_{j,i,t}$ are random vectors. The coefficients $\beta_{j,i}$ remain constant and may exhibit a grouped structure, while $\delta_{j,i,S_t}$ are regime dependent. The error term $e_{j,i,t}$ and the latent states are defined as in Model B. The proposed algorithm can be adapted to estimate this model.

\subsection{Assumptions}

The following assumptions are familiar in the Markov-switching literature and ensure that the models are identified and that estimation is feasible. Their main role is to help assess whether the models are suitable for the intended applications.

\begin{assumption}
The Markov process $S_t$ affects all disaggregate series $y_{j,i,t}$ simultaneously.
\end{assumption}

This assumption implies a common
cycle affects all data units with the same timing. It is nonrestrictive
when all series are coincident indicators of the business cycle, such as
changes in non-farm employment, industrial production, capacity utilization,
real personal income, and real manufacturing and trade sales. It
is violated when coincident indicators are combined with leading
indicators, such as stock market indices or the term structure of interest
rates. It implies that the disaggregate variables should be chosen so
that they respond to $S_t$ with approximately the same timing. Economic
theory and empirical evidence can often guide this choice.

\begin{assumption}
The process $S_t$ is ergodic, with
$S_t \in \mathcal{S}=\{1,\ldots,S\}$ and $S>1$.
\end{assumption}

This assumption implies that every state can be reached from every other state with positive probability. This property ensures the existence of a unique stationary distribution and is necessary for identification and inference based on the likelihood function.

\begin{assumption}
The regime means satisfy
\(
\mu_{j,i,1} \le \mu_{j,i,2} \le \cdots \le \mu_{j,i,S},
\)
for $i\in\{1,\ldots,N_j\}$ and $j\in\{1,\ldots,J\}$, with at least one of the
inequalities being strict for at least one time series.
\end{assumption}

This assumption addresses the label-switching property of regime-switching models. Requiring at least one strict inequality ensures that the regimes are distinguishable in the data without restricting heterogeneity across series and groups.

\begin{assumption}
The error terms $\{e_t\}$ are conditionally normally distributed with
$E(e_t\mid S_t)=0$ and
\(
E(e_t e_t' \mid S_t)=V_{S_t}
=\operatorname{diag}\{\sigma_{1,S_t}^2,\sigma_{2,S_t}^2,\ldots,
\sigma_{N,S_t}^2\}.
\)
\end{assumption}

The normality assumption reflects the parametric nature of the regime-switching model and is standard in the literature. Setting the off-diagonal elements of the covariance matrix to zero implies that the joint likelihood factorizes as the product of the marginal likelihoods of the individual series. This is analogous to a composite likelihood approach: cross-sectional dependence in the errors is intentionally not modeled, which may sacrifice estimation efficiency but reduces computational costs and the risk of misspecification.\footnote{See \citet{varin2011overview} for a review of composite likelihoods.}

\begin{assumption}
(i) There exists $0<C<\infty$ with $|\mu_{j,i,s}|\le C$ and $C^{-1}\le \sigma_{j,i,s}^2 \le C$ for $i\in\{1,\ldots,N_j\}$,
$j\in\{1,\ldots,J\}$, and $s\in\mathcal{S}$;
(ii) The roots of the lag polynomials
\(
\Phi_j(L)=1-\phi_{j,1}L-\phi_{j,2}L^2-\cdots-\phi_{j,k}L^k
\)
lie outside the unit circle for $j\in\{1,\ldots,J\}$;
(iii) $p_{ss'}\in[0,1)$ for $s,s'\in\mathcal{S}$.
\end{assumption}

This assumption places mild restrictions on the parameters. It rules out degenerate cases by requiring finite means and positive variances.  The stability condition on the lag polynomials implies stationary autoregressive dynamics within each regime. These conditions are needed for the numerical stability of the proposed estimation algorithm.

\subsection{Effects of cross-sectional aggregation on state recovery}
\label{sec:effects of aggregation}
We consider a simple setting, with two disaggregated time series and known parameters, to examine the effects of aggregation on the identification of recessions. This can be viewed as an asymptotic exercise in which all parameters are assumed to have been consistently estimated. In finite samples, the disaggregated model involves more parameters and greater estimation uncertainty. Simulations later in the paper examine this issue.

Regime switching in mean and in variance each carries useful information. We consider the two cases separately. Lagged dependent variables (as in Model B) do not affect regime updating, since the autoregressive coefficients do not switch; they are omitted.

\paragraph{Case 1: Switching in the mean only.}

We consider two time series given by
\begin{equation}
\begin{aligned}
y_{1,t}
&=
\mu_{1,1}\mathbf 1_{\{S_t=1\}}
+
\mu_{1,2}\mathbf 1_{\{S_t=2\}}
+
e_{1,t},
\qquad
e_{1,t}\sim N(0,\sigma_1^2),
\\
y_{2,t}
&=
\mu_{2,1}\mathbf 1_{\{S_t=1\}}
+
\mu_{2,2}\mathbf 1_{\{S_t=2\}}
+
e_{2,t},
\qquad
e_{2,t}\sim N(0,\sigma_2^2).
\end{aligned}
\label{eq:mean_switch_model}
\end{equation}
The aggregated series is
\(
Y_{t}=\mu_{1}\mathbf{1}_{\{S_{t}=1\}}+\mu_{2}\mathbf{1}_{\{S_{t}=2\}}+e
_{t}, \ \ e _{t}\sim N(0,\sigma ^{2}),
\)
where $Y_{t}=y_{1,t}+y_{2,t}$, $\mu_{1}=\mu _{1,1}+\mu _{2,1}$, $\mu_{2}=\mu
_{1,2}+\mu _{2,2}$, and $\sigma ^{2}=\sigma _{1}^{2}+\sigma _{2}^{2}$. Assume $\mu _{1,1}\leq\mu _{1,2}$ and $\mu _{2,1}\leq\mu _{2,2}$. 

For the disaggregated model, let $\hat{S}%
_{t|t-1}^{1}\equiv E(\mathbf{1}_{\{S_{t}=1%
\}}|y_{t-1},y_{t-2},\ldots)$ be the predicted probability of recession, with conditional mean squared error $\hat{P}%
_{t|t-1}^{1}\equiv E\{(\mathbf{1}_{\{S_{t}=1\}}-\hat{S}%
_{t|t-1}^{1})^{2}|y_{t-1},y_{t-2},\ldots\}=\hat{S}_{t|t-1}^{1}(1-\hat{S}%
_{t|t-1}^{1})$. Similarly, for the aggregated model, define $\Breve{S}_{t|t-1}^{1}\equiv E(\mathbf{1}_{\{S_{t}=1%
\}}|Y_{t-1},Y_{t-2},\ldots)$ and $\Breve{P}%
_{t|t-1}^{1}\equiv E\{(\mathbf{1}_{\{S_{t}=1\}}-\Breve{S}%
_{t|t-1}^{1})^{2}|Y_{t-1},Y_{t-2},\ldots\}$. We use $\hat{S}^1_{t\mid t}$ and $\Breve{S}^1_{t\mid t}$ for filtered probabilities, and
$\hat{S}^1_{t\mid T}$ and $\Breve{S}^1_{t\mid T}$ for smoothed probabilities. The following result characterizes when the aggregated and disaggregated models are equivalent, and when they are not.

\begin{proposition} [Equivalence of filters with switching in the mean only] Suppose the data are generated by (\ref{eq:mean_switch_model}) and the initial conditions of the disaggregated and aggregated models are the same: $\hat{S}%
_{1|0}^{1}=\Breve{S}_{1|0}^{1}=p_{1}$ with $0<p_{1}<1$. Then, $\hat{S}_{t|t}^{1}=\Breve{S}%
_{t|t}^{1}$, $\hat{S}_{t|t-1}^{1}=\Breve{S}_{t|t-1}^{1}$, and $\hat{S}%
_{t|T}^{1}=\Breve{S}_{t|T}^{1}$ for all $t\geq 1$ if and only if 
\begin{equation}
\frac{\mu _{1,1}-\mu _{1,2}}{\sigma _{1}^{2}}=\frac{\mu _{2,1}-\mu _{2,2}}{%
\sigma _{2}^{2}}.  \label{cond-1}
\end{equation}
\end{proposition}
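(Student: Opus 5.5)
The argument reduces everything to the Hamilton filter, in which the data enter only through the likelihood ratio of the two regimes at each date. Write $\eta_t(1)$ and $\eta_t(2)$ for the conditional densities of the observation given $S_t=1,2$. The filter update is
\[
\hat S^1_{t|t}=\frac{\hat S^1_{t|t-1}\,\eta_t(1)}{\hat S^1_{t|t-1}\,\eta_t(1)+(1-\hat S^1_{t|t-1})\,\eta_t(2)},
\]
and the prediction step is $\hat S^1_{t+1|t}=p_{11}\hat S^1_{t|t}+p_{21}(1-\hat S^1_{t|t})$. So the filtered probability is a strictly increasing function of the prior odds times the likelihood ratio $\Lambda_t=\eta_t(1)/\eta_t(2)$. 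The same maps apply to the aggregated model with its own ratio $\Breve\Lambda_t$.

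The disaggregated model has independent errors, so its log-likelihood ratio is
\[
\log\hat\Lambda_t=\sum_{i=1,2}\Big[\tfrac{\mu_{i,1}-\mu_{i,2}}{\sigma_i^2}\,y_{i,t}-\tfrac{\mu_{i,1}^2-\mu_{i,2}^2}{2\sigma_i^2}\Big].
\]
Writing $a_i=(\mu_{i,1}-\mu_{i,2})/\sigma_i^2$, the coefficients on the data are $(a_1,a_2)$. The aggregated model gives
\[
\log\Breve\Lambda_t=\tfrac{\mu_1-\mu_2}{\sigma^2}(y_{1,t}+y_{2,t})-\tfrac{\mu_1^2-\mu_2^2}{2\sigma^2},
\]
so its coefficients on the data are $(a,a)$ with $a=(\mu_1-\mu_2)/\sigma^2$.

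\emph{Sufficiency.} If $a_1=a_2=c$, then $\mu_{i,1}-\mu_{i,2}=c\sigma_i^2$, and summing gives $\mu_1-\mu_2=c\sigma^2$, so $a=c$. The intercepts must then be shown to agree. Write $\mu_{i,1}^2-\mu_{i,2}^2=c\sigma_i^2(\mu_{i,1}+\mu_{i,2})$, which makes the disaggregated intercept $-\tfrac c2\sum_i(\mu_{i,1}+\mu_{i,2})$. The aggregated intercept is $-\tfrac{1}{2\sigma^2}(\mu_1-\mu_2)(\mu_1+\mu_2)=-\tfrac c2(\mu_1+\mu_2)$, which is the same quantity. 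Hence $\hat\Lambda_t=\Breve\Lambda_t$ for every realization. Induction from the common initial value $p_1$ then gives equality of the predicted and filtered probabilities for all $t$.

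For the smoothed probabilities, Kim's backward recursion uses only the filtered and predicted probabilities and the transition matrix, so they coincide as well. Equivalently, the joint posterior of $(S_1,\dots,S_T)$ depends on the data only through $\{\Lambda_t\}$.

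\emph{Necessity.} This is the main step, because the ``for all $t$'' claim must be read as holding for all realizations of the data, or almost surely. Suppose $a_1\neq a_2$, and take $t=1$. With $0<p_1<1$, the map from $\Lambda_1$ to $\hat S^1_{1|1}$ is strictly monotone, so $\hat S^1_{1|1}=\Breve S^1_{1|1}$ forces $\log\hat\Lambda_1=\log\Breve\Lambda_1$. That is,
\[
(a_1-a)y_{1,1}+(a_2-a)y_{2,1}=\text{const}.
\]
Because $a_1\neq a_2$, the vector $(a_1-a,a_2-a)$ is nonzero. The left side is therefore a nondegenerate linear function of a Gaussian vector with positive variances, and it fails to be constant with probability one. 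This contradicts equality of the filters. Equality of the predicted probabilities at $t=2$ gives the same conclusion when $p_{11}\neq p_{21}$. The filtered statement alone is enough for the ``only if'' direction, and the case $T=1$ handles the smoothed probabilities in the same way.

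One degenerate case needs attention: both $\mu_{i,1}=\mu_{i,2}$. Then $a_1=a_2=0$, the condition holds trivially, and both filters ignore the data, so the result is consistent.
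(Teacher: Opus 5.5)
Your proposal is correct and follows the paper's proof essentially step for step. Both arguments write the Hamilton update as a strictly monotone function of the period-$t$ likelihood ratio, show that the log ratios are affine in $(y_{1,t},y_{2,t})$ and coincide exactly when the slopes $\delta_i/\sigma_i^2$ are equal, propagate equality by induction through the prediction step and the backward smoother, and obtain necessity by comparing slopes at $t=1$. The only difference is that you state the necessity step almost surely (a nondegenerate linear combination of the conditionally Gaussian observations is a.s. not constant) rather than as an identity of affine functions on $\mathbb{R}^2$, which matches the paper's ``differ with probability one'' claim a bit more directly.
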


The proposition implies that unless the signal-to-noise ratios are equal as in \eqref{cond-1}, the two filters differ with probability one. This is because filtering is driven by the log-likelihood ratio comparing the two regimes. In the disaggregated model, the ratio is
\[
\ell_t^{(d)}
= \text{const}
+ \frac{\delta_1}{\sigma_1^2}\,(y_{1,t}-\mu_{1,1})
+ \frac{\delta_2}{\sigma_2^2}\,(y_{2,t}-\mu_{2,1}),
\qquad
\delta_i=\mu_{i,1}-\mu_{i,2},
\]
where $y_{1,t}$ and $y_{2,t}$ provide regime information through two separate weights $\delta_1/\sigma_1^2$ and $\delta_2/\sigma_2^2$. In the aggregate model, the ratio is
\[
\ell_t^{(a)}
= \text{const}
+ \frac{\delta_1+\delta_2}{\sigma_1^2+\sigma_2^2}
\bigl[(y_{1,t}+y_{2,t})-(\mu_{1,1}+\mu_{2,1})\bigr],
\]
and regime evidence enters only through the sum, i.e., with equal weights on $y_{1,t}$ and $y_{2,t}$. If 
condition \eqref{cond-1} holds, the two disaggregated weights are equal, and the disaggregated likelihood ratio reduces to a function of $Y_t = y_{1,t}+y_{2,t}$ alone. Otherwise, there exist movements in $(y_{1,t},y_{2,t})$ that change the disaggregated likelihood ratio but leave the aggregate unchanged, and aggregation discards information useful for filtering. The Appendix contains the formal proof.

The next result shows that the disaggregated filter is weakly more precise than the aggregate filter, and strictly more precise when \eqref{cond-1} does not hold. The precision is measured in terms of the mean squared error: for the disaggregated model $E\{(\mathbf{1}_{\{S_{t}=1\}}-\hat{S}%
_{t|t}^{1})^{2}\}=E\{\hat{S}_{t|t}^{1}(1-\hat{S}%
_{t|t}^{1})\}$, and for the aggregrated model $E\{(\mathbf{1}_{\{S_{t}=1\}}-\Breve{S}%
_{t|t}^{1})^{2}\}=E\{\Breve{S}_{t|t}^{1}(1-\Breve{S}%
_{t|t}^{1})\}$.
\begin{corollary}[Filter precision with switching in the mean only] Suppose the data are generated by (\ref{eq:mean_switch_model}) and the initial conditions for the disaggregated and aggregated models are the same: $\hat{S}%
_{1|0}^{1}=\Breve{S}_{1|0}^{1}=p_{1}$ with $0<p_{1}<1$.
Then, for any $t$,
\begin{equation}
E\!\left[\Breve S^1_{t\mid t}\bigl(1-\Breve S^1_{t\mid t}\bigr)\right]
\;\ge\;
E\!\left[\hat S^1_{t\mid t}\bigl(1-\hat S^1_{t\mid t}\bigr)\right].
\label{eq:aggregation_loss}
\end{equation}
Moreover, if \eqref{cond-1} fails, then the inequality is strict for at least one $t$
(in fact, already at $t=1$).
\end{corollary}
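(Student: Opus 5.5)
The plan is to read $\hat S^1_{t\mid t}$ and $\Breve S^1_{t\mid t}$ as conditional expectations of the same random variable $D_t\equiv\mathbf 1_{\{S_t=1\}}$ given nested information sets. Let $\mathcal F_t=\sigma(y_{1,1},y_{2,1},\ldots,y_{1,t},y_{2,t})$ and $\mathcal G_t=\sigma(Y_1,\ldots,Y_t)$. Since $Y_s=y_{1,s}+y_{2,s}$, we have $\mathcal G_t\subseteq\mathcal F_t$. Both filters are started at the same, correctly specified, prior $p_1$ and use the true parameters. Hence $\hat S^1_{t\mid t}=E(D_t\mid\mathcal F_t)$ and $\Breve S^1_{t\mid t}=E(D_t\mid\mathcal G_t)$ almost surely.

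The key identity is that, because $D_t^2=D_t$,
\[
E\!\left[E(D_t\mid\mathcal H)\bigl(1-E(D_t\mid\mathcal H)\bigr)\right]=E\!\left[\bigl(D_t-E(D_t\mid\mathcal H)\bigr)^2\right]
\]
for any sub-$\sigma$-field $\mathcal H$. This is exactly the MSE interpretation stated before the corollary. By the Pythagorean decomposition of conditional expectations, using the tower property with $\mathcal G_t\subseteq\mathcal F_t$,
\[
E\!\left[(D_t-\Breve S^1_{t\mid t})^2\right]=E\!\left[(D_t-\hat S^1_{t\mid t})^2\right]+E\!\left[(\hat S^1_{t\mid t}-\Breve S^1_{t\mid t})^2\right].
\]
The cross term vanishes because $\hat S^1_{t\mid t}-\Breve S^1_{t\mid t}$ is $\mathcal F_t$-measurable and $D_t-\hat S^1_{t\mid t}$ is orthogonal to $\mathcal F_t$. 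This gives \eqref{eq:aggregation_loss} for every $t$.

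For strictness, it suffices to show that when \eqref{cond-1} fails, $P(\hat S^1_{1\mid 1}\neq\Breve S^1_{1\mid 1})>0$. Then the second term on the right is positive at $t=1$. At $t=1$ the filters are explicit logistic functions of the log-likelihood ratios displayed after the Proposition:
\[
\hat S^1_{1\mid1}=\Lambda\bigl(\operatorname{logit}p_1+\ell_1^{(d)}\bigr),\qquad \Breve S^1_{1\mid1}=\Lambda\bigl(\operatorname{logit}p_1+\ell_1^{(a)}\bigr),
\]
where $\Lambda$ is the logistic function. Since $\Lambda$ is strictly increasing, the two filters coincide exactly on the set where $\ell_1^{(d)}=\ell_1^{(a)}$. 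Both ratios are affine in $(y_{1,1},y_{2,1})$, so this set is either the whole plane or a line (possibly empty). It is the whole plane only if the coefficients agree, i.e.
\[
\frac{\delta_1}{\sigma_1^2}=\frac{\delta_2}{\sigma_2^2}=\frac{\delta_1+\delta_2}{\sigma_1^2+\sigma_2^2},
\]
which is precisely \eqref{cond-1}. When \eqref{cond-1} fails, the set is a line. A line has Lebesgue measure zero, while $(y_{1,1},y_{2,1})$ has a density: it is a mixture with weights $p_1$ and $1-p_1$ of nondegenerate bivariate normals. So the filters differ with probability one at $t=1$, and the inequality is strict there. This step can also be read off from the "only if" part of the Proposition, but doing it directly at $t=1$ gives the sharper claim.

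The main obstacle I expect is the identification step: the filtered probabilities computed by Hamilton's recursion must equal the true conditional expectations. This requires that the initial condition $p_1$ is the actual marginal law of $S_1$ and that the parameters are true. It is what legitimizes the projection argument, and I would state it explicitly. The remaining care point is the degenerate edge case in which $\delta_1=\delta_2=0$. This is excluded, since then \eqref{cond-1} holds trivially.
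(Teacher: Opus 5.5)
Your proposal is correct and follows essentially the paper's route: the paper also identifies the filters as $E(I_t\mid\mathcal F_t^d)$ and $E(I_t\mid\mathcal F_t^a)$ with $\mathcal F_t^a\subseteq\mathcal F_t^d$, and your Pythagorean decomposition is its law-of-total-variance step, since $E[\operatorname{Var}(\hat S^1_{t\mid t}\mid\mathcal F_t^a)]=E[(\hat S^1_{t\mid t}-\Breve S^1_{t\mid t})^2]$. For strictness the paper simply cites Proposition 1. Your direct $t=1$ argument (the two affine log-likelihood ratios can agree only on a Lebesgue-null line when \eqref{cond-1} fails) supplies the ``positive probability'' step that the paper leaves implicit.
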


To supplement the theoretical result, we consider a simulation experiment on how relative filtering precision depends on the signal-to-noise ratio in (\ref{cond-1}). We report the mean squared error (MSE) and mean absolute error (MAE) of the filtered probabilities for $S_t=1$. For the disaggregated model, these are given by
$
E\{(\hat{S}_{t\mid t}^1-\mathbf{1}_{\{S_t=1\}})^2\}
$ and
$E\{\lvert \hat{S}_{t\mid t}^1-\mathbf{1}_{\{S_t=1\}}\rvert\},
$
and for the aggregated model by
$
E\!\{(\Breve{S}_{t\mid t}^1-\mathbf{1}_{\{S_t=1\}})^2\}
$ and
$E\{\lvert \Breve{S}_{t\mid t}^1-\mathbf{1}_{\{S_t=1\}}\rvert\}$.
All quantities are averaged over 1{,}000 simulation replications. To cover a range of cases, we set $\sigma_{1}^{2}=1$, 
$\mu_{1,2}-\mu_{1,1}=\mu_{2,2}-\mu_{2,1}=2$, and let $\sigma_{2}^{2}$ vary from $0.1$ to $10$. 
Two extreme cases, $\sigma_{2}^{2}=0.01$ and $\sigma_{2}^{2}=100$, are also considered. 
The transition probabilities are set to $p_{11}=0.75$ and $p_{22}=0.95$, consistent with typical estimates for U.S. business cycles.

The results are reported in Table \ref{table:agg_mean_only}.
When $\sigma_2^2=1$, the two models perform identically, as condition \eqref{cond-1} predicts. Their differences increase monotonically as $\sigma_2^2$ decreases from 1. When $\sigma_2^2=0.5$, the MSE under disaggregation is about 18\% lower, with a similar reduction in MAE. When $\sigma_2^2=0.2$, the improvement increases to 78\% under both loss measures; when $\sigma_2^2=0.1$, it reaches 97\%. When $\sigma_2^2>1$, the disaggregated model also does better, where the MSE reduction increases from 11\% at $\sigma_2^2=2$ to 46\% at $\sigma_2^2=10$, and reaches 60\% when $\sigma_2^2=100$. Even with $\sigma_2^2=10$, the standard deviation of the second series is only $\sqrt{10}$ times that of the first, a difference that remains empirically relevant. In the extreme case $\sigma_2^2=100$, the second series contains essentially no regime information. The disaggregated filter adapts to this by downweighting the noisy series, and its MSE is only twice the $\sigma_2^2=1$ benchmark. The aggregated filter becomes completely uninformative. The results show that aggregation can lead to substantial loss when signal-to-noise ratios differ across series. 

\begin{table}[H]
\centering
\caption{MSE and MAE of inference on $S_t$, switching-mean case.}
\label{table:agg_mean_only}

\renewcommand{\arraystretch}{0.54}
\begin{tabular}{c c c c c c c c c}
\toprule
& \multicolumn{4}{c}{\text{MSE}} 
& \multicolumn{4}{c}{\text{MAE}} \\
\cmidrule(lr){2-5} \cmidrule(lr){6-9}
$\sigma_2^{2}$ 
& Dis.
& Agg.
& Imp.
& \% Imp. 
& Dis.
& Agg.
& Imp.
& \% Imp. \\
\midrule
\multicolumn{9}{c}{\text{Benchmark: $\sigma_2^{2}=1$}} \\
\midrule
1.00 & 0.0255 & 0.0255 & 0.0000 & 0.0 & 0.0514 & 0.0514 & 0.0000 & 0.0 \\
\midrule
\multicolumn{9}{c}{\text{Panel A: $\sigma_2^{2}<1$}} \\
\midrule
0.90 & 0.0235 & 0.0236 & 0.0001 & 0.4 & 0.0474 & 0.0476 & 0.0002 & 0.4 \\
0.80 & 0.0217 & 0.0220 & 0.0003 & 1.4 & 0.0436 & 0.0443 & 0.0007 & 1.6 \\
0.70 & 0.0193 & 0.0202 & 0.0009 & 4.5 & 0.0388 & 0.0407 & 0.0019 & 4.7 \\
0.60 & 0.0167 & 0.0185 & 0.0018 & 9.7 & 0.0336 & 0.0372 & 0.0036 & 9.7 \\
0.50 & 0.0135 & 0.0165 & 0.0030 & 18.2 & 0.0273 & 0.0334 & 0.0061 & 18.3 \\
0.40 & 0.0101 & 0.0150 & 0.0049 & 32.7 & 0.0203 & 0.0301 & 0.0098 & 32.6 \\
0.30 & 0.0061 & 0.0129 & 0.0068 & 52.7 & 0.0124 & 0.0261 & 0.0137 & 52.5 \\
0.20 & 0.0024 & 0.0112 & 0.0087 & 78.0 & 0.0049 & 0.0224 & 0.0175 & 78.1 \\
0.10 & 0.0002 & 0.0091 & 0.0089 & 97.8 & 0.0003 & 0.0185 & 0.0182 & 98.4 \\
0.01 & 0.0000 & 0.0077 & 0.0077 & 100.0 & 0.0000 & 0.0155 & 0.0155 & 100.0 \\
\midrule
\multicolumn{9}{c}{\text{Panel B: $\sigma_2^{2}>1$}} \\
\midrule
2.00   & 0.0357 & 0.0400 & 0.0043 & 10.9 & 0.0718 & 0.0808 & 0.0090 & 11.1 \\
3.00   & 0.0399 & 0.0512 & 0.0113 & 22.1 & 0.0806 & 0.1035 & 0.0229 & 22.1 \\
4.00   & 0.0426 & 0.0599 & 0.0173 & 28.9 & 0.0861 & 0.1214 & 0.0353 & 29.1 \\
5.00   & 0.0439 & 0.0670 & 0.0231 & 34.5 & 0.0889 & 0.1358 & 0.0469 & 34.5 \\
6.00   & 0.0452 & 0.0728 & 0.0276 & 38.1 & 0.0913 & 0.1478 & 0.0565 & 38.2 \\
7.00   & 0.0457 & 0.0776 & 0.0319 & 41.1 & 0.0926 & 0.1576 & 0.0650 & 41.2 \\
8.00   & 0.0463 & 0.0812 & 0.0349 & 43.0 & 0.0939 & 0.1655 & 0.0716 & 43.3 \\
9.00   & 0.0470 & 0.0850 & 0.0380 & 44.7 & 0.0951 & 0.1730 & 0.0779 & 45.1 \\
10.00  & 0.0472 & 0.0877 & 0.0405 & 46.2 & 0.0957 & 0.1790 & 0.0833 & 46.5 \\
100.00 & 0.0507 & 0.1256 & 0.0749 & 59.6 & 0.1027 & 0.2579 & 0.1552 & 60.2 \\
\bottomrule
\end{tabular}
\begin{minipage}{1\linewidth}
\footnotesize
\singlespacing
\emph{Notes:} 
``Dis.''\ and ``Agg.''\ refer to disaggregated and aggregated data, respectively. 
``Imp.''\ denotes absolute improvement of Dis.\ over Agg.; ``\% Imp.''\ denotes percentage improvements and is computed relative to Agg.
\end{minipage}
\end{table}

\paragraph{Case 2: Switching in the variance only.}

We consider two time series given by
\begin{align}
y_{1,t} &= \mu_{1} + e_{1,t}, \ \ \ e_{1,t} \sim N\!\left(0,\,
\sigma_{1,1}^{2}\mathbf{1}_{\{S_t=1\}} + \sigma_{1,2}^{2}\mathbf{1}_{\{S_t=2\}}
\right), \nonumber \\
y_{2,t} &= \mu_{2} + e_{2,t},  \ \ \ e_{2,t} \sim N\!\left(0,\,
\sigma_{2,1}^{2}\mathbf{1}_{\{S_t=1\}} + \sigma_{2,2}^{2}\mathbf{1}_{\{S_t=2\}}
\right).
\label{eq:disaggregated_var_only}
\end{align}
The aggregate model is 
$
Y_{t}=\mu+e _{t}$, with $e _{t}\sim N(0,\sigma _{1}^{2}\mathbf{1}%
_{\{S_{t}=1\}}+\sigma _{2}^{2}\mathbf{1}_{\{S_{t}=2\}}),
$
where $Y_{t}=y_{1,t}+y_{2,t}$, $\mu=\mu _{1}+\mu _{2}$, $\sigma
_{1}^{2}=\sigma _{1,1}^{2}+\sigma _{2,1}^{2}$, and $\sigma _{2}^{2}=\sigma
_{1,2}^{2}+\sigma _{2,2}^{2}$. Assume $\sigma_{1,1}\geq \sigma_{1,2}$ and $\sigma_{2,1}\geq \sigma_{2,2}$.

\begin{proposition} [Equivalence of filters with switching in the variance only] Suppose the data are generated by \eqref{eq:disaggregated_var_only} and the initial conditions for the disaggregated and aggregated models are the same: $\hat{S}%
_{1|0}^{1}=\Breve{S}_{1|0}^{1}=p_{1}$ with $0<p_{1}<1$. Then, $\hat{S}_{t|t}^{1}=%
\Breve{S}_{t|t}^{1}$, $\hat{S}_{t|t-1}^{1}=\Breve{S}_{t|t-1}^{1},$ and $\hat{%
S}_{t|T}^{1}=\Breve{S}_{t|T}^{1}$ for all $t\geq 1$ if and only if 
\begin{equation*}
\sigma _{1,1}^{2}=\sigma _{1,2}^{2}\text{ and }\sigma _{2,1}^{2}=\sigma
_{2,2}^{2}\text{,}
\end{equation*}%
that is, there is no regime switching in the variance of either series.
\end{proposition}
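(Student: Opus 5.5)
The plan is to reduce everything to the one-period likelihood ratio that drives Hamilton's filter, as in the discussion after Proposition 1. With known parameters and common transition probabilities $p_{ss'}$, both filters use the same recursion. They differ only in the conditional densities $f(\cdot\mid S_t=s)$ entering the update step
\[
\hat S^1_{t\mid t}=\frac{\hat S^1_{t\mid t-1}\,\Lambda^{(d)}_t}{\hat S^1_{t\mid t-1}\,\Lambda^{(d)}_t+1-\hat S^1_{t\mid t-1}},\qquad \Lambda^{(d)}_t=\frac{f(y_{1,t},y_{2,t}\mid S_t=1)}{f(y_{1,t},y_{2,t}\mid S_t=2)},
\]
with the analogous formula and ratio $\Lambda^{(a)}_t=f(Y_t\mid S_t=1)/f(Y_t\mid S_t=2)$ for the aggregated model. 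The prediction step $S^1_{t+1\mid t}=p_{11}S^1_{t\mid t}+p_{21}(1-S^1_{t\mid t})$ is identical in both models. Kim's backward smoother uses only the filtered and predicted probabilities together with $p_{ss'}$. Hence equality of the filtered probabilities for all $t$ propagates to the predicted and smoothed ones.

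Next I will compute the log-ratios explicitly. Write $u_{i,t}=y_{i,t}-\mu_i$ and $a_i=\sigma_{i,1}^{-2}-\sigma_{i,2}^{-2}$, and let $A=\sigma_1^{-2}-\sigma_2^{-2}$ with the aggregate variances $\sigma_1^2,\sigma_2^2$ defined in the statement. Then
\[
\log\Lambda^{(d)}_t=c_d-\tfrac12\bigl(a_1u_{1,t}^2+a_2u_{2,t}^2\bigr),\qquad
\log\Lambda^{(a)}_t=c_a-\tfrac12 A\,(u_{1,t}+u_{2,t})^2 ,
\]
where $c_d$ and $c_a$ are constants.

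\emph{Sufficiency.} If $\sigma_{i,1}^2=\sigma_{i,2}^2$ for $i=1,2$, then $a_1=a_2=A=0$ and $c_d=c_a=0$. Both ratios are then identically one, so both filters merely propagate $p_1$ through the same transition matrix. The filtered, predicted and smoothed probabilities therefore coincide for all $t$.

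\emph{Necessity.} It suffices to use $t=1$. Since $0<p_1<1$, the map $\Lambda\mapsto p_1\Lambda/(p_1\Lambda+1-p_1)$ is strictly increasing. So $\hat S^1_{1\mid 1}=\Breve S^1_{1\mid 1}$ (with probability one) forces $\Lambda^{(d)}_1=\Lambda^{(a)}_1$ almost surely. The pair $(y_{1,1},y_{2,1})$ has a density that is positive on all of $\mathbb R^2$, and both sides are continuous. Hence the two quadratic polynomials in $(u_1,u_2)$ must agree identically. The coefficient of $u_1u_2$ gives $A=0$. The coefficients of $u_1^2$ and $u_2^2$ then give $a_1=A=0$ and $a_2=A=0$, i.e. $\sigma_{i,1}^2=\sigma_{i,2}^2$ for $i=1,2$. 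Because this matching of coefficients already pins down both series, the ordering assumptions $\sigma_{i,1}\ge\sigma_{i,2}$ are not needed for this step.

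The main point requiring care is the interpretation of the ``for all $t$'' equality: it holds as random variables, almost surely. This is what lets me pass from equality of the filters to pointwise equality of the likelihood ratios, using full support and continuity. The rest, including the inheritance of equality by the smoothed probabilities, is routine given the structure of the Hamilton filter and Kim smoother.
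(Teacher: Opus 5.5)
Your proof is correct and follows essentially the same route as the paper. You reduce to $t=1$ and use injectivity of the update map in the likelihood ratio, then match coefficients of the quadratic forms: the cross term forces equal aggregate variances, and the squared terms then rule out switching in either series, while your explicit handling of the trivial sufficiency direction and of the passage from almost-sure equality to polynomial identity via full support fills in steps the paper leaves implicit.
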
 

The proposition implies that the filters are different whenever variances switch. In the disaggregated model, the log-likelihood ratio comparing the two regimes is
\[
\ell_t^{(d)}
=
\text{const}
+
\tilde y_{1,t}^{2}
\left(
\frac{1}{2\sigma_{1,1}^{2}}-\frac{1}{2\sigma_{1,2}^{2}}
\right)
+
\tilde y_{2,t}^{2}
\left(
\frac{1}{2\sigma_{2,1}^{2}}-\frac{1}{2\sigma_{2,2}^{2}}
\right),
\]
where $\tilde y_{i,t}=y_{i,t}-\mu_i$, and each series provides regime
information through its own squared deviation $\tilde y_{i,t}^2$. In the aggregated model, the log-likelihood ratio is
\[
\ell_t^{(a)}
=
\text{const}
+
(\tilde y_{1,t}+\tilde y_{2,t})^{2}
\left(
\frac{1}{2(\sigma_{1,1}^{2}+\sigma_{2,1}^{2})}
-
\frac{1}{2(\sigma_{1,2}^{2}+\sigma_{2,2}^{2})}
\right),
\]
which depends on the data only through the squared deviation of the aggregate:
$(Y_t-\mu)^2=(\tilde y_{1,t}+\tilde y_{2,t})^2$. For the two filters to coincide, the two terms in $\ell_t^{(d)}$ must combine 
into a multiple of $(\tilde y_{1,t}+\tilde y_{2,t})^2$. However, this cannot hold except in the trivial case of no-switching: a weighted sum 
$a\,\tilde y_{1,t}^2 + b\,\tilde y_{2,t}^2$ matches 
$c(\tilde y_{1,t}+\tilde y_{2,t})^2$ for all $(\tilde y_{1,t}, \tilde y_{2,t})$ 
only when $c = 0$ (since the latter contains an additional cross term 
$2c\,\tilde y_{1,t}\tilde y_{2,t}$ which must be set to zero), 
and this forces $a = b=0$. Even when the aggregate variance is constant across regimes, 
i.e., $\ell_t^{(a)}$ carries no regime information, the disaggregated 
likelihood continues to distinguish states through the relative scaling of 
individual squared deviations. Equivalence happens only in the 
degenerate case in which variances do not switch at all. The next result confirms that the disaggregated filter is strictly more accurate whenever at least one variance is switching.

\begin{corollary}[Filter precision with switching in the variance only]
Suppose the data are generated by \eqref{eq:disaggregated_var_only} and the initial
conditions for the disaggregated and aggregated models are the same: $\hat{S}_{1\mid 0}^{1}=\Breve{S}_{1\mid 0}^{1}=p_{1}$
with $0<p_{1}<1$. Then for each $t$:
\begin{equation}
E\!\left[\Breve S^1_{t\mid t}\bigl(1-\Breve S^1_{t\mid t}\bigr)\right]
\;\ge\;
E\!\left[\hat S^1_{t\mid t}\bigl(1-\hat S^1_{t\mid t}\bigr)\right].
\label{eq:aggregation_loss_var_weak}
\end{equation}
Moreover, if there exists regime switching
in the variance of at least one series, i.e., if $\sigma_{1,1}^{2}\neq\sigma_{1,2}^{2}$
or $\sigma_{2,1}^{2}\neq\sigma_{2,2}^{2}$, then the inequality is strict for at least
one $t$ (in fact, already at $t=1$).
\end{corollary}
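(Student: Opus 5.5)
The argument rests on the fact that the aggregate information set is contained in the disaggregate one, so the mean squared error of a conditional expectation can only increase when we condition on less. Let $\mathcal F_t=\sigma(y_1,\ldots,y_t)$ and $\mathcal G_t=\sigma(Y_1,\ldots,Y_t)$, with $Y_s=y_{1,s}+y_{2,s}$. Then $\mathcal G_t\subseteq\mathcal F_t$. Since the parameters are known and the initial conditions coincide, $\hat S^1_{t\mid t}=E(\mathbf 1_{\{S_t=1\}}\mid\mathcal F_t)$ and $\Breve S^1_{t\mid t}=E(\mathbf 1_{\{S_t=1\}}\mid\mathcal G_t)$ are the true conditional probabilities under the data generating process \eqref{eq:disaggregated_var_only}. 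Under that process the aggregate model is correctly specified for $Y_t$: a sum of independent normals given $S_t$ is normal with variance $\sigma^2_{1,S_t}+\sigma^2_{2,S_t}$. By the tower property, $\Breve S^1_{t\mid t}=E(\hat S^1_{t\mid t}\mid\mathcal G_t)$, so we get the orthogonal decomposition
\[
E\bigl[(\mathbf 1_{\{S_t=1\}}-\Breve S^1_{t\mid t})^2\bigr]
=E\bigl[(\mathbf 1_{\{S_t=1\}}-\hat S^1_{t\mid t})^2\bigr]
+E\bigl[(\hat S^1_{t\mid t}-\Breve S^1_{t\mid t})^2\bigr].
\]
Each filter satisfies $E[(\mathbf 1_{\{S_t=1\}}-\hat S)^2]=E[\hat S(1-\hat S)]$, so this decomposition gives \eqref{eq:aggregation_loss_var_weak}. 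The same argument works verbatim for the preceding corollary.

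For strictness we need $P(\hat S^1_{1\mid 1}\ne\Breve S^1_{1\mid 1})>0$. By the decomposition, this makes the gap $E[(\hat S^1_{1\mid1}-\Breve S^1_{1\mid1})^2]$ strictly positive. At $t=1$, Bayes' rule gives $\hat S^1_{1\mid1}=\Lambda(\log\tfrac{p_1}{1-p_1}+\ell^{(d)}_1)$ and $\Breve S^1_{1\mid1}=\Lambda(\log\tfrac{p_1}{1-p_1}+\ell^{(a)}_1)$, where $\Lambda$ is the logistic function and $\ell^{(d)}_1,\ell^{(a)}_1$ are the log-likelihood ratios displayed before the statement. Since $\Lambda$ is strictly increasing, the two filters agree exactly when $\ell^{(d)}_1=\ell^{(a)}_1$. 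This is a polynomial identity in $(\tilde y_{1,1},\tilde y_{2,1})$:
\[
a\tilde y_{1}^2+b\tilde y_{2}^2+c_0=c(\tilde y_1+\tilde y_2)^2+c_0' .
\]
Here $a=\tfrac1{2\sigma_{1,1}^2}-\tfrac1{2\sigma_{1,2}^2}$ and $b$ is the analogous coefficient for the second series.

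The main obstacle is to show this identity fails on a set of positive Lebesgue measure whenever $a\neq0$ or $b\neq0$; the text before the statement argues only pointwise for all $(\tilde y_1,\tilde y_2)$. The difference of the two sides is a nonzero quadratic polynomial, and I verify nonzeroness by cases. If $c\ne0$, the cross term $2c\tilde y_1\tilde y_2$ is nonzero. If $c=0$, then $a$ or $b$ is nonzero by hypothesis. The zero set of a nonzero polynomial in $\mathbb R^2$ has Lebesgue measure zero. The density of $(y_{1,1},y_{2,1})$ is a mixture of Gaussians with weights $p_1,1-p_1\in(0,1)$ and is positive everywhere, so the filters differ with positive probability. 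Because the variances are bounded away from zero by Assumption 5, all quantities are finite and the expectations are well defined. Hence the inequality is strict at $t=1$.
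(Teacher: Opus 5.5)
Your proposal is correct and follows the paper's proof: the nesting $\mathcal G_t\subseteq\mathcal F_t$ and the tower property give the weak inequality (your orthogonal decomposition is the paper's law-of-total-variance step, since $E[\operatorname{Var}(\hat S^1_{t\mid t}\mid\mathcal G_t)]=E[(\hat S^1_{t\mid t}-\Breve S^1_{t\mid t})^2]$), and strictness at $t=1$ follows because the two filters disagree with positive probability. The one difference is that you prove this disagreement directly, using the fact that the zero set of a nonzero quadratic polynomial is Lebesgue-null (which in fact gives disagreement almost surely), whereas the paper invokes Proposition~2, whose proof rests on the same cross-term comparison of coefficients.
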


To supplement the theoretical result, we run a simulation experiment similar to the switching-mean case. We set $\sigma_{1,1}^{2}=\sigma_{1,2}^{2}=\sigma_{2,1}^{2}=1$ and vary $\sigma_{2,2}^{2}$ from 0.1 to 10.
Two extreme cases $\sigma_{2,2}^{2}=0.01$ and $\sigma_{2,2}^{2}=100$ are also included. 
Transition probabilities are the same as in the switching-mean case. Table~\ref{table:switching_variance} reports the MSE and
MAE of filtered regime probabilities. 
\begin{table}[H]
\centering
\caption{MSE and MAE of inference on $S_t$, switching-variance case.}
\renewcommand{\arraystretch}{0.55}
\label{table:switching_variance}
\begin{tabular}{c c c c c c c c c}
\toprule
& \multicolumn{4}{c}{\text{MSE}} 
& \multicolumn{4}{c}{\text{MAE}} \\
\cmidrule(lr){2-5} \cmidrule(lr){6-9}
$\sigma_{2,2}^2$ 
& Dis.
& Agg.
& Imp.
& \% Imp. 
& Dis.
& Agg.
& Imp.
& \% Imp. \\
\midrule
\multicolumn{9}{c}{\text{Benchmark: $\sigma_{2,2}^2=1$}} \\
\midrule
1.00 & 0.1311 & 0.1311 & 0.0000 & 0.0 & 0.2700 & 0.2700 & 0.0000 & 0.0 \\
\midrule
\multicolumn{9}{c}{\text{Panel A: $\sigma_{2,2}^2<1$}} \\
\midrule
0.90 & 0.1309 & 0.1311 & 0.0001 &  0.1 & 0.2696 & 0.2699 & 0.0003 &  0.1 \\
0.80 & 0.1301 & 0.1309 & 0.0008 &  0.6 & 0.2678 & 0.2695 & 0.0017 &  0.6 \\
0.70 & 0.1283 & 0.1306 & 0.0023 &  1.8 & 0.2640 & 0.2689 & 0.0049 &  1.8 \\
0.60 & 0.1252 & 0.1301 & 0.0049 &  3.8 & 0.2574 & 0.2678 & 0.0104 &  3.9 \\
0.50 & 0.1200 & 0.1294 & 0.0095 &  7.3 & 0.2467 & 0.2664 & 0.0196 &  7.4 \\
0.40 & 0.1127 & 0.1283 & 0.0156 & 12.2 & 0.2313 & 0.2641 & 0.0328 & 12.4 \\
0.30 & 0.1021 & 0.1269 & 0.0248 & 19.5 & 0.2093 & 0.2611 & 0.0519 & 19.9 \\
0.20 & 0.0872 & 0.1253 & 0.0381 & 30.4 & 0.1784 & 0.2575 & 0.0792 & 30.8 \\
0.10 & 0.0658 & 0.1232 & 0.0575 & 46.7 & 0.1339 & 0.2529 & 0.1190 & 47.1 \\
0.01 & 0.0253 & 0.1207 & 0.0954 & 79.0 & 0.0512 & 0.2476 & 0.1964 & 79.3 \\
\midrule
\multicolumn{9}{c}{\text{Panel B: $\sigma_{2,2}^2>1$}} \\
\midrule
2.00   & 0.1252 & 0.1289 & 0.0037 &  2.9 & 0.2569 & 0.2650 & 0.0081 &  3.1 \\
3.00   & 0.1176 & 0.1251 & 0.0075 &  6.0 & 0.2411 & 0.2572 & 0.0161 &  6.3 \\
4.00   & 0.1113 & 0.1214 & 0.0101 &  8.3 & 0.2275 & 0.2488 & 0.0214 &  8.6 \\
5.00   & 0.1058 & 0.1178 & 0.0120 & 10.2 & 0.2159 & 0.2410 & 0.0251 & 10.4 \\
6.00   & 0.1012 & 0.1143 & 0.0131 & 11.5 & 0.2066 & 0.2339 & 0.0274 & 11.7 \\
7.00   & 0.0975 & 0.1112 & 0.0138 & 12.4 & 0.1986 & 0.2276 & 0.0289 & 12.7 \\
8.00   & 0.0943 & 0.1086 & 0.0143 & 13.2 & 0.1919 & 0.2217 & 0.0297 & 13.4 \\
9.00   & 0.0914 & 0.1060 & 0.0145 & 13.7 & 0.1858 & 0.2162 & 0.0304 & 14.1 \\
10.00  & 0.0887 & 0.1036 & 0.0149 & 14.4 & 0.1803 & 0.2113 & 0.0310 & 14.7 \\
100.00 & 0.0467 & 0.0566 & 0.0099 & 17.5 & 0.0940 & 0.1141 & 0.0201 & 17.6 \\
\bottomrule
\end{tabular}
\begin{minipage}{1\linewidth}
\footnotesize
\singlespacing
\emph{Notes:} ``Dis.''\ and ``Agg.''\ refer to disaggregated and aggregated data, respectively. ``Imp.''\ denotes absolute improvement of Dis.\ over Agg.; ``\% Imp.''\ denotes percentage improvement and is computed relative to Agg. \end{minipage}
\end{table}

The benchmark case $\sigma_{2,2}^2=1$ corresponds
to no variance switching; the two filters perform identically in terms of MSE and MAE, as predicted by Proposition~2. When $\sigma_{2,2}^2<1$ (Panel~A), the overall variance of the
second series decreases. The difference in MSE and MAE between the two filters grows monotonically as
$\sigma_{2,2}^2$ moves away from one, reaching about 80\% improvement at
$\sigma_{2,2}^2=0.01$. A similar pattern is seen under $\sigma_{2,2}^2>1$ (Panel~B), although the gains are
more modest. In this case, the overall variance of the second series
increases, which spreads out probability mass and flattens the likelihood, making regime switching harder to detect. In summary, the results confirm that aggregation loses no information only in the trivial case of no switching; otherwise the disaggregated filter is strictly more precise. In addition, switches into a lower-variance regime are easier to detect, and aggregation costs more in that case.

\section{Algorithm for inference and forecasting}

Let $\theta$ denote the vector of model parameters, including the means $\mu_{j,i,s}$, variances $\sigma^{2}_{j,i,s}$, autoregressive coefficients $\phi_{j,m}$, and transition probabilities $p_{ss'}$, where $j$ indexes groups, $i$ indexes series within group $j$, $m$ denotes the lag order, and $s$ indexes regimes. The dimension of $\theta$ is
\(
M = 2S\sum_{j=1}^{J} N_j + kJ + S(S-1),
\)
which exceeds 200 when there are 50 series. In such settings, likelihood-based optimization often fails to converge. Instead, we use a Bayesian approach that samples $\theta$ and the latent state sequence $\{S_t\}_{t=1}^T$ jointly. The exposition focuses on the disaggregate specification in \eqref{eq:2.1}; differences that arise under \eqref{eq:2.2} will be noted where relevant.

Our sampler adapts the sampling algorithms in
\cite{albert1993}, \cite{kim1998}, and \cite{kim1999} to a high-dimensional
setting. The goal is to sample from the following
joint posterior distribution conditional on
$y_{1:T}=(y_{1}^{\prime},y_{2}^{\prime},\ldots,y_{T}^{\prime})^{\prime}$:
\begin{align*}
 f\!\left(
S_{1:T},
\theta,
y_{T+1},\ldots,y_{T+l},
S_{T+1},\ldots,S_{T+l}
\mid
y_{1:T}
\right),
\end{align*}
where $S_{1:T}$ is the latent state sequence over the sample, $l$
is the forecast horizon, and $y_{T+1},\ldots,y_{T+l}$ and
$S_{T+1},\ldots,S_{T+l}$ are future observables and latent states,
respectively. As often done in the literature, we factor this joint distribution as a conditional and a marginal distribution:
\begin{align*}
 f\!\left(
y_{T+1},\ldots,y_{T+l},
S_{T+1},\ldots,S_{T+l}
\mid
y_{1:T},S_{1:T},
\theta
\right) f\!\left(
S_{1:T},
\theta
\mid
y_{1:T}
\right).
\end{align*}
Once the within-sample latent states and parameter values have been
sampled, future states and observables can be generated conditional on these
draws. For the distribution $f\!\left(S_{1:T},\theta \mid y_{1:T}\right)$, we consider its full set of conditionals and sample from them iteratively:
\begin{align*}
\makebox[8.2cm][l]{\textup{Latent states:}}
& f\!\left(
S_{1:T}
\mid
y_{1:T},
\mu_{j,i,s},
\sigma_{j,i,s}^{2},
\phi_{j,m},
p_{ss'}
\right), \\
\makebox[8.2cm][l]{\textup{Regression coefficients and variances:}}
& f\!\left(
\mu_{j,i,s},
\sigma_{j,i,s}^{2},
\phi_{j,m}
\mid y_{1:T},
S_{1:T},
p_{ss'}
\right), \\
\makebox[8.2cm][l]{\textup{Transition matrix:}}
& f\!\left(
p_{ss'}
\mid
y_{1:T},
S_{1:T},
\mu_{j,i,s},
\sigma_{j,i,s}^{2},
\phi_{j,m}
\right).
\end{align*}
Here $\mu_{j,i,s}$, $\sigma_{j,i,s}^2$, and
$\phi_{j,m}$ denote the full parameter
collections $\{\mu_{j,i,s}\}_{j,i,s}$, $\{\sigma_{j,i,s}^2\}_{j,i,s}$, and
$\{\phi_{j,m}\}_{j,m}$. 

Next we provide details for each step, with an emphasis on the aspects related to the high-dimensional setting. We assume there are two
states, $S_t \in \{1,2\}$. Note that for a given \(t\), the log likelihood conditional on \(S_{t-k:t}\) is
\begin{equation}
\log p\!\left(y_t \mid S_{t-k:t},\, y_{1:t-1},\,\theta\right)
=
-\frac{1}{2}
\sum_{j=1}^{J}\sum_{i=1}^{N_j}
\left[
\log\!\bigl(2\pi\sigma_{j,i,S_t}^2\bigr)
+
\frac{\varepsilon_{j,i,t}^2}{\sigma_{j,i,S_t}^2}
\right],
\label{eq:log_obs_density_A}
\end{equation}
where
$
\varepsilon_{j,i,t}
=
y_{j,i,t}
-
\mu_{j,i,S_t}
-
\sum_{m=1}^{k}
\phi_{j,m}
\bigl(y_{j,i,t-m}-\mu_{j,i,S_{t-m}}\bigr)$
for $t=k+1,\ldots,T .
$
Summing over \(t=k+1,\ldots,T\), we get the log likelihood for the full sample:
\begin{equation*}
\log p\!\left(y_{k+1:T} \mid S_{1:T},\,y_{1:k},\,\theta\right)
=
-\frac{1}{2}
\sum_{t=k+1}^{T}
\sum_{j=1}^{J}\sum_{i=1}^{N_j}
\left[
\log\!\bigl(2\pi\sigma_{j,i,S_t}^2\bigr)
+
\frac{\varepsilon_{j,i,t}^2}{\sigma_{j,i,S_t}^2}
\right].
\end{equation*}

\subsection{Step 1: Generate regression coefficients and variances}

In each iteration, we first sample $\mu_{j,i,s}$, followed by $\sigma_{j,i,s}^{2}$, and then $\phi_{j,m}$. 

\subsubsection{Generate the means}

The regime-dependent means are sampled from $ f\!(
\mu_{j,i,s}
\mid
y_{1:T},
S_{1:T},
p_{ss'},\sigma_{j,i,s}^{2},
\phi_{j,m})$ at once. Because the sampling is conditional on \(\{\phi_{j,m}\}_{m=1}^k\) and \(S_{1:T}\), Model~A can be rewritten as
\[
y_{j,i,t}
-
\sum_{m=1}^{k}\phi_{j,m}y_{j,i,t-m}
=
\delta_{j,i,1}
\bigl(
\mathbf{1}_{\{S_{t}=1\}}
-
\sum_{m=1}^{k}\phi_{j,m}\mathbf{1}_{\{S_{t-m}=1\}}
\bigr)
+
\mu_{j,i,2}
\bigl(
1-\sum_{m=1}^{k}\phi_{j,m}
\bigr)
+
e_{j,i,t}.
\]
For each \(t\), this represents a system of \(N\) seemingly unrelated regression equations, with outcome
\(y_{j,i,t}-\sum_{m=1}^{k}\phi_{j,m}y_{j,i,t-m}\),
regressors
\(
\mathbf{1}_{\{S_{t}=1\}}
-
\sum_{m=1}^{k}\phi_{j,m}\mathbf{1}_{\{S_{t-m}=1\}}
\quad\text{and}\quad
1-\sum_{m=1}^{k}\phi_{j,m},
\)
coefficients \(\delta_{j,i,1} = \mu_{j,i,1} - \mu_{j,i,2}\le 0\) and \(\mu_{j,i,2}\), and normal error terms with a diagonal covariance matrix. 
Let $\delta_{1}$ be a vector containing all $\delta_{j,i,1}$, and
$\mu_{2}$ all \(\mu_{j,i,2}\).
Under the truncated normal priors that we use, the posterior distribution of \(\delta_{1}\) is a truncated multivariate normal, and that of \(\mu_{2}\) is a standard multivariate normal (see \citet{Zellner1971}). Our sampling relies on vectorized operations and exploits independence across equations, so that the required matrix inversion has dimension at most \(2N\), regardless of the number of observations in the system. Sampling for the truncated normal uses the minimax tilting method of \cite{Botev2017}, specifically designed for high-dimensional data. Details on these aspects can be found in the Appendix. Monte Carlo experiments show that drawing \(\delta_{1}\) and \(\mu_{2}\) jointly is feasible for up to a few hundred time series. When the cross-sectional dimension is even larger, the elements of these vectors can be sampled in blocks, provided that the priors are independent across parameters.


\subsubsection{Generate the variances}
The variances are sampled from $f\!(
\sigma_{j,i,s}^{2}
\mid
y_{1:T},
S_{1:T},
p_{ss'},\mu_{j,i,s},
\phi_{j,m}).$ The procedure uses independence across time series and samples the variances one at a time. The explicit sampling steps are identical to those in \citet[pp.~216--218]{kim1999}. The variances are expressed as
\begin{equation*}
\sigma_{j,i,S_t}^{2}
=
\sigma_{j,i,2}^{2}
\left[
1+\left(\frac{\sigma_{j,i,1}^{2}}{\sigma_{j,i,2}^{2}}-1\right)\mathbf{1}_{\{S_t=1\}}
\right]
=
\sigma_{j,i,2}^{2}\left[1+h_{j,i,2}\mathbf{1}_{\{S_t=1\}}\right],
\end{equation*}
where \(h_{j,i,2}=\sigma_{j,i,1}^{2}/\sigma_{j,i,2}^{2}-1\). With independent
inverse-Gamma priors on \(\sigma_{j,i,2}^{2}\) and \(1+h_{j,i,2}\), the conditional
posterior distributions are also inverse-Gamma and can be sampled directly. Each step is
a one-dimensional problem and does not require any matrix inversion. 
Implementation details are included in the Appendix.

\subsubsection{Generate the autoregressive coefficients}
The autoregressive coefficients are sampled from
\(
f\!(
\phi_{j,m}
\mid
y_{1:T},
S_{1:T},
p_{ss'},
\mu_{j,i,s},
\sigma_{j,i,s}^{2}).
\) These are drawn separately from $\mu_{j,i,s}$ to avoid nonlinear interactions between them. Following the same approach as for $\mu_{j,i,s}$, we write Model~A as
\[
y_{j,i,t}-\mu_{j,i,S_t}
=
\sum_{m=1}^{k}\phi_{j,m}\bigl(y_{j,i,t-m}-\mu_{j,i,S_{t-m}}\bigr)
+
e_{j,i,t},
\qquad
e_{j,i,t}\mid S_t \sim N(0,\sigma_{j,i,S_t}^2).
\]
Stacking across groups gives a system with dependent variables $y_{j,i,t}-\mu_{j,i,S_t}$, regressors $y_{j,i,t-m}-\mu_{j,i,S_{t-m}}$, common coefficients $\phi_{j,m}$ within group, and diagonal residual covariance. Under a normal prior on $\phi_{j,m}$, the conditional
posterior distribution is normal and can be sampled directly. As in the case of sampling $\mu$, we adopt a vectorized representation so that
the required matrix inversion has dimension at most $Jk$, regardless of the
cross-sectional dimension. Implementation details are provided in the
Appendix.
\subsection{Step 2: Generate latent states} 
The latent states are sampled from $ f\!\left(
S_{1:T}
\mid
y_{1:T},
\theta\right)$.  We build on the multi-move Gibbs sampler of \citet{kim1998}, which generates the entire sequence of latent states at once. It does so by factoring the joint posterior distribution into conditionals and sampling backward in time:
\[
f\!\left(S_{1:T}\mid y_{1:T},\theta\right)
=
f\!\left(S_T\mid y_{1:T},\theta\right)
\prod_{t=1}^{T-1}
f\!\left(S_t \mid S_{t+1:T},\, y_{1:T},\,\theta\right).
\]
Let \(k\ge 0\) denote the number of state lags entering the observation density. In the Appendix, we show that the following relationship between smoothing and filtering holds for Model A:
\begin{equation}
p\!\left(S_t \mid S_{t+1:T},\,  y_{1:T},\, \theta\right)
\;\propto\;
p\!\left(S_{t:t+k}\mid  y_{1:t+k},\, \theta\right)
\label{sampler-2}
\end{equation}
for any \(k>0\) and $p\!\left(S_t \mid S_{t+1:T},\,  y_{1:T},\, \theta\right)
\;\propto\;
p\!\left(S_{t:t+1}\mid  y_{1:t+1},\, \theta\right)$ for $k=0.$
This delivers an exact sampler for the smoothed distribution (left-hand side) as a by-product of the state filtering step in the filter of \citet{hamilton1989} (right-hand side). The sampler of \cite{kim1998} is exact only if \(k=0\); when \(k>0\), it is approximate because the dependence on some states is omitted.

In implementation, the right hand side of (\ref{sampler-2}) is computed forward recursively as follows. Given \(p(S_{t-k-1:t-1}\mid y_{1:t-1},\theta)\), the Markov property of \(\{S_t\}\) implies the following prediction:
\begin{equation}
p\!\left(S_{t-k:t}\mid y_{1:t-1},\,\theta\right)
=
p_{S_{t-1},S_t}
\sum_{S_{t-k-1}\in\mathcal S}
p\!\left(S_{t-k-1:t-1}\mid  y_{1:t-1},\,\theta\right).
\label{eq:prediction_A}
\end{equation}
Then, given $y_t$, the state probability can be updated as:
\begin{equation*}
p\!\left(S_{t-k:t}\mid  y_{1:t},\,\theta\right)
\;\propto\;
p\!\left(y_t \mid S_{t-k:t},\, y_{1:t-1},\,\theta\right)
p\!\left(S_{t-k:t}\mid  y_{1:t-1},\,\theta\right).
\end{equation*}
This update is carried out using log weights because the observation density involves products over many series, which can become numerically small. Specifically, for each configuration \((S_{t-k},\ldots,S_t)\), we compute the unnormalized log weight as the sum of the log observation density in \eqref{eq:log_obs_density_A} and the log predictive probability from \eqref{eq:prediction_A}. To improve numerical stability, the maximum log weight across all configurations is subtracted before exponentiation, after which the resulting weights are normalized to sum to one.  
Given the filtered joint distributions
\(\{p(S_{t-k:t}\mid y_{1:t},\theta)\}\),
sampling proceeds backward starting at \(T\). At time \(t\), one draws from
\(
p\!\left(S_t,\, S_{t+1:t+k}\mid y_{1:t+k},\, \theta\right),
\)
which is obtained directly from the filtered probabilities at time \(t+k\).
When \(t+k>T\), the conditioning set \(S_{t+1:t+k}\) is truncated accordingly.  These operations do not involve any matrix operations, and the sampling involves
drawing Bernoulli random variables. The procedure remains feasible for large cross-sections. To our knowledge, this exact state sampler is new.


\subsection{Step 3: Generate the transition matrix}

The elements of the transition matrix  are sampled from $ f\!\left(
p_{ss'}
\mid
y_{1:T},
S_{1:T},
\mu_{j,i,s},
\sigma_{j,i,s}^{2},
\phi_{j,m}
\right)= f\!\left(
p_{ss'}
\mid
S_{1:T}
\right)$, as in \citet{kim1998}. Since \(S_t\) is a scalar process, the sampling is independent of the model's dimension. We use independent Beta priors for \(p_{11}\) and \(p_{22}\), which implies the posterior distributions are also Beta. The Appendix provides the exact expressions.

\subsection{Step 4: Forecast future observables}

The values of $y_{T+1:T+l}$ for any $l\geq1$ are sampled from
$
f(y_{T+1:T+l}, S_{T+1:T+l}\mid y_{1:T}, S_{1:T}, \theta).
$
Since
$
f(y_{T+1:T+l}, S_{T+1:T+l}\mid y_{1:T}, S_{1:T}, \theta)
=
\prod_{j=1}^{l}
f(y_{T+j}, S_{T+j}\mid y_{1:T+j-1}, S_{1:T+j-1}, \theta),
$
the joint distribution can be sampled sequentially from the $l$ conditional distributions on the right-hand side. For each $j=1,\ldots,l$, we draw $S_{T+j}$ from the transition distribution
$
f(S_{T+j}\mid S_{T+j-1}, \theta),
$
and then draw $y_{T+j}$ from
$
f(y_{T+j}\mid y_{1:T+j-1}, S_{1:T+j}, \theta).
$
Repeating this procedure for $j=1,\ldots,l$ gives a draw from
$
f(y_{T+1:T+l}, S_{T+1:T+l}\mid y_{1:T}, S_{1:T}, \theta).
$
Applying this step to each posterior draw of $(\theta, S_{1:T})$ produces predictive simulations of future observations. 

This part is essentially the same as in low-dimensional regime-switching models. It is tractable in the high-dimensional case because the state $S_{T+j}$ is common across all series. Only one draw is needed for the state, and the observations are then drawn independently across series.

\subsection{Prior}

The following prior distributions are used for the simulation experiments and the empirical application. The priors are assumed to be independent unless stated otherwise.

\paragraph{Regime-dependent means.}
For each series, the regime-dependent means are assigned priors
$
\delta_{j,i,1}=\mu_{j,i,1}-\mu_{j,i,2} \sim N(-0.5,\,50^2)$ and 
$\mu_{j,i,2} \sim N(0,\,50^2).
$
The prior variances are set to be diffuse, reflecting the growth rates can differ substantially across disaggregated series.

\paragraph{Innovation variances.}

When the variances are restricted to be constant across regimes, we set
$\sigma_{j,i}^2 \sim IG(0,0)$, which corresponds to the improper prior
$p(\sigma_{j,i}^2)\propto (\sigma_{j,i}^2)^{-1}$. When the variances are allowed to switch across regimes, we still assign $\sigma_{j,i,2}^2 \sim IG(0,0)$, and as in \citet{kim1999}, we write $\sigma_{j,i,1}^2 = (1+h_{j,i,2})\sigma_{j,i,2}^2$ and adopt a conjugate inverse-gamma prior for $1+h_{j,i,2}$. We set  $1+h_{j,i,2} \sim IG\!\left(\tfrac{T_{1}}{2},\,\tfrac{T_{1}+2}{2}\right)$, where $T_{1}$ denotes the number of observations assigned to regime 1 under the current draw of $S_{1:T}$. This represents a state-dependent prior rather than a fixed prior. Under this specification, the mode of $1+h_{j,i,2}$ is equal to 1 (the prior for the recession variance peaks at the expansion variance), and the amount of shrinkage increases with $T_{1}$. The shrinkage is helpful because the recession regime typically contains few observations. The prior also allows for right-skewed dispersion in the variance ratio.

\paragraph{AR coefficients.}
For $k>0$, the AR coefficients (by group) are assigned Gaussian prior
\(
N(0,\,0.5^2), 
\)
with the stationarity restriction enforced by rejection sampling, i.e., draws are accepted
only if the roots of the AR polynomial lie outside the unit circle. 

\paragraph{Transition probabilities.}
For the Markov chain, the diagonal transition probabilities
$p_{11}$ (recession) and $p_{22}$ (expansion) are assigned priors
$
p_{11}\sim \mathrm{Beta}(2,2)$ and
$p_{22}\sim \mathrm{Beta}(30,2),
$
with means $0.50$ and $0.94$, and standard deviations $0.22$ and $0.04$. 
These values are informed by previous estimates of regime-switching models 
for U.S. business cycles. The tight prior on $p_{22}$ reflects the consistent  empirical evidence and reinforces the separation between regimes.

\section{Monte Carlo experiments}

This section examines finite-sample effects of disaggregation on state recovery. We begin with a two-variable setting in which the signal-to-noise ratio varies and compare aggregated and disaggregated filters. This analysis complements the asymptotic characterization in Section 2.2 by incorporating parameter estimation uncertainty. We then consider a setting where the signal-to-noise ratio is fixed while the cross-sectional dimension increases, and examine how filtering and smoothing change as additional series are included. The sample size set to 190, matching the empirical sample size used to estimate model (\ref{eq:MS}). Parameters are estimated via MCMC using the priors in Section 3.5. All results are averages over 100 independent replications.

\subsection{Setting 1: two variables, varying signal-to-noise ratio}\label{sec:4.1}
We use the DGP in (\ref{eq:mean_switch_model}) with the same parameter values. The estimated model has the same specification as the DGP, with no lags, and assumes constant variances across regimes. We report filtered probabilities, as those in Table \ref{table:agg_mean_only}; results based on smoothed probabilities are similar.

Table \ref{table:switching_mean} shows that, relative to Table \ref{table:agg_mean_only}, where parameters are known, the MSE and MAE are generally larger for both filters, reflecting the uncertainty from parameter estimation. At the benchmark $\sigma_2^2 = 1$, the differences in MSE and MAE between the aggregated and disaggregated filters are negligible, in line with the theory. Gains from disaggregation emerge as $\sigma_2^2$ moves away from 1 in either direction. For $\sigma_2^2 < 1$, the improvement reaches about $50\%$ at $\sigma_2^2 = 0.3$ and close to $100\%$ for $\sigma_2^2 \leq 0.10$. For $\sigma_2^2 > 1$, gains grow more slowly but are still substantial -- about $30\%$ by $\sigma_2^2 = 4$ and above $45\%$ at $\sigma_2^2 = 10$. The monotonic pattern seen in Table \ref{table:agg_mean_only} is preserved under parameter estimation uncertainty.

Therefore, although parameter uncertainty increases the MSE and MAE for both filters, it does not alter the main finding: aggregation attenuate regime signals, and the gains from disaggregation can be large when heterogeneity is substantial.

\begin{table}[H]
\centering
\caption{Finite-Sample MSE and MAE of inference on $S_t$, switching-mean case.}
\renewcommand{\arraystretch}{0.55}
\label{table:switching_mean}
\begin{tabular}{c c c c c c c c c}
\toprule
& \multicolumn{4}{c}{\text{MSE}} 
& \multicolumn{4}{c}{\text{MAE}} \\
\cmidrule(lr){2-5} \cmidrule(lr){6-9}
$\sigma_{2}^{2}$ 
& Dis.
& Agg.
& Imp.
& \% Imp. 
& Dis.
& Agg.
& Imp.
& \% Imp. \\
\midrule
\multicolumn{9}{c}{\text{Benchmark: $\sigma_{2}^{2}=1$}} \\
\midrule
1.00 & 0.0278 & 0.0280 & 0.0002 & 0.7 
     & 0.0570 & 0.0585 & 0.0015 & 2.6 \\
\midrule
\multicolumn{9}{c}{\text{Panel A: $\sigma_{2}^{2}<1$}} \\
\midrule
0.90 & 0.0252 & 0.0255 & 0.0003 & 1.2 
     & 0.0528 & 0.0544 & 0.0016 & 2.9 \\
0.80 & 0.0239 & 0.0245 & 0.0006 & 2.4 
     & 0.0496 & 0.0518 & 0.0022 & 4.2 \\
0.70 & 0.0205 & 0.0217 & 0.0012 & 5.5 
     & 0.0415 & 0.0450 & 0.0035 & 7.8 \\
0.60 & 0.0184 & 0.0202 & 0.0018 & 8.9 
     & 0.0368 & 0.0418 & 0.0050 & 12.0 \\
0.50 & 0.0139 & 0.0183 & 0.0044 & 24.0 
     & 0.0301 & 0.0401 & 0.0100 & 24.9 \\
0.40 & 0.0107 & 0.0159 & 0.0052 & 32.7 
     & 0.0224 & 0.0337 & 0.0113 & 33.5 \\
0.30 & 0.0058 & 0.0133 & 0.0075 & 56.4 
     & 0.0130 & 0.0287 & 0.0157 & 54.7 \\
0.20 & 0.0025 & 0.0111 & 0.0086 & 77.5 
     & 0.0052 & 0.0247 & 0.0195 & 78.9 \\
0.10 & 0.0002 & 0.0100 & 0.0098 & 98.0 
     & 0.0006 & 0.0220 & 0.0214 & 97.3 \\
0.01 & 0.0000 & 0.0076 & 0.0076 & 100.0 
     & 0.0000 & 0.0173 & 0.0173 & 100.0 \\
\midrule
\multicolumn{9}{c}{\text{Panel B: $\sigma_{2}^{2}>1$}} \\
\midrule
2.00   & 0.0415 & 0.0454 & 0.0039 & 8.6 
       & 0.0826 & 0.0949 & 0.0123 & 13.0 \\
3.00   & 0.0418 & 0.0547 & 0.0129 & 23.6 
       & 0.0864 & 0.1139 & 0.0275 & 24.1 \\
4.00   & 0.0465 & 0.0691 & 0.0226 & 32.7 
       & 0.0967 & 0.1423 & 0.0456 & 32.0 \\
5.00   & 0.0476 & 0.0742 & 0.0266 & 35.8 
       & 0.0973 & 0.1518 & 0.0545 & 35.9 \\
6.00   & 0.0508 & 0.0842 & 0.0334 & 39.7 
       & 0.1002 & 0.1705 & 0.0703 & 41.2 \\
7.00   & 0.0504 & 0.0866 & 0.0362 & 41.8 
       & 0.1010 & 0.1762 & 0.0752 & 42.7 \\
8.00   & 0.0527 & 0.0965 & 0.0438 & 45.4 
       & 0.1054 & 0.1900 & 0.0846 & 44.5 \\
9.00   & 0.0521 & 0.0982 & 0.0461 & 46.9 
       & 0.1068 & 0.1938 & 0.0870 & 44.9 \\
10.00  & 0.0524 & 0.1058 & 0.0534 & 50.5 
       & 0.1065 & 0.1975 & 0.0910 & 46.1 \\
100.00 & 0.0601 & 0.1456 & 0.0855 & 58.7 
       & 0.1150 & 0.2363 & 0.1213 & 51.3 \\
\bottomrule
\end{tabular}
\begin{minipage}{1\linewidth}
\footnotesize
\singlespacing
\emph{Notes:} 
``Dis.''\ and ``Agg.''\ refer to disaggregated and aggregated data, respectively. 
``Imp.''\ denotes absolute improvement of Dis.\ over Agg.; ``\% Imp.''\ denotes percentage improvements and is computed relative to Agg.
\end{minipage}
\end{table}

\subsection{Setting 2: Increasing $N$ with a fixed signal-to-noise ratio}\label{sec:4.2}

We calibrate the DGP to empirical estimates from model (\ref{eq:MS}). These estimates provide 30 sets of regime-specific means and variances, along with a common transition matrix. For simulation, we first generate a common latent state vector of length 190 using this transition matrix. We then draw one of the 30 parameter sets at random, scale the regime difference $\delta_{i,1} = \mu_{i,1} - \mu_{i,2}$ by one-half, and use it together with the latent state vector to generate a time series of length 190 from model (\ref{eq:MS}). This procedure is repeated to obtain $N$ series, with $N \in \{5,10,20,40,60,80,100,200,400\}$. The estimated model matches the specification of (\ref{eq:MS}), with no lags and variances allowed to switch across regimes.

Here the regime difference $\delta_{i,1}$ is multiplied by 0.5 to keep regime classification nontrivial even in large cross sections. Otherwise, the advantage of disaggregation remains, but the regimes are estimated with little uncertainty in both the aggregated and disaggregated models once $N$ exceeds 60, making this experiment uninformative about relative performance at higher dimensions.

\begin{table}[H]
\centering
\caption{MSE and MAE of inference on $S_t$, DGP calibrated to GDP components.}
\label{table:m6_mae_mse}
\renewcommand{\arraystretch}{0.55}
\begin{tabular}{c c c c c c c c c}
\toprule
& \multicolumn{4}{c}{\text{MSE}} 
& \multicolumn{4}{c}{\text{MAE}} \\
\cmidrule(lr){2-5} \cmidrule(lr){6-9}
$N$ 
& Dis.
& Agg.
& Imp.
& \% Imp. 
& Dis.
& Agg.
& Imp.
& \% Imp. \\
\midrule
\multicolumn{9}{c}{\text{Panel A: Filtered}} \\
\midrule
5   & 0.1278 & 0.1350 & 0.0072 & 5.3 & 0.1560 & 0.1972 & 0.0412 & 20.9 \\
10  & 0.0872 & 0.1300 & 0.0428 & 32.9 & 0.1230 & 0.1889 & 0.0659 & 34.9 \\
20  & 0.0416 & 0.1171 & 0.0755 & 64.5 & 0.0695 & 0.1711 & 0.1016 & 59.4 \\
40  & 0.0220 & 0.0957 & 0.0737 & 77.0 & 0.0374 & 0.1486 & 0.1112 & 74.8 \\
60  & 0.0106 & 0.0676 & 0.0570 & 84.3 & 0.0178 & 0.1141 & 0.0963 & 84.4 \\
80  & 0.0056 & 0.0486 & 0.0430 & 88.5 & 0.0092 & 0.0860 & 0.0768 & 89.3 \\
100 & 0.0023 & 0.0346 & 0.0323 & 93.4 & 0.0038 & 0.0631 & 0.0593 & 94.0 \\
200 & 0.0003 & 0.0085 & 0.0082 & 96.5 & 0.0004 & 0.0167 & 0.0163 & 97.6 \\
400 & 0.0000 & 0.0018 & 0.0018 & 100.0 & 0.0001 & 0.0038 & 0.0037 & 97.4 \\
\midrule
\multicolumn{9}{c}{\text{Panel B: Smoothed}} \\
\midrule
5   & 0.1236 & 0.1319 & 0.0083 & 6.3 & 0.1475 & 0.1916 & 0.0441 & 23.0 \\
10  & 0.0817 & 0.1264 & 0.0447 & 35.4 & 0.1101 & 0.1823 & 0.0722 & 39.6 \\
20  & 0.0366 & 0.1117 & 0.0751 & 67.2 & 0.0577 & 0.1622 & 0.1045 & 64.4 \\
40  & 0.0196 & 0.0879 & 0.0683 & 77.7 & 0.0310 & 0.1354 & 0.1044 & 77.1 \\
60  & 0.0097 & 0.0578 & 0.0481 & 83.2 & 0.0149 & 0.0968 & 0.0819 & 84.6 \\
80  & 0.0050 & 0.0396 & 0.0346 & 87.4 & 0.0078 & 0.0695 & 0.0617 & 88.8 \\
100 & 0.0021 & 0.0273 & 0.0252 & 92.3 & 0.0033 & 0.0491 & 0.0458 & 93.3 \\
200 & 0.0003 & 0.0056 & 0.0053 & 94.6 & 0.0004 & 0.0116 & 0.0112 & 96.6 \\
400 & 0.0000 & 0.0012 & 0.0012 & 100.0 & 0.0001 & 0.0025 & 0.0024 & 96.0 \\
\bottomrule
\end{tabular}
\vspace{0.5em}
\begin{minipage}{1\linewidth}
\singlespacing
\footnotesize
\emph{Notes:} MSE and MAE denote mean squared and mean absolute error. $N$ is the number of series in disaggregated data.
``Dis.''\ and ``Agg.''\ refer to disaggregated and aggregated data, respectively. 
``Imp.''\ denotes absolute improvement of Dis.\ over Agg.; ``\% Imp.''\ denotes percentage improvements and is computed relative to Agg.
\end{minipage}
\end{table}
Table \ref{table:m6_mae_mse} reports the MSE and MAE of inference on $S_t$, with panel A for filtering and panel B for smoothing. For filtering, the two models perform comparably at $N = 5$. With $N = 10$ series, disaggregation reduces the filtered MSE from 0.1300 to 0.0872, an improvement of about 32\%. At $N = 20$, the filtered MSE is 0.1171 for the aggregated model and 0.0416 for the disaggregated model, a reduction of about 65\%. At $N = 100$, the filtered MSE is 0.0346 for the aggregated model and 0.0023 for the disaggregated model, a reduction of about 93\%. Smoothing shows similar percentage reductions. As $N$ further increases from 100, MSEs for both approaches become small. This reflects the design: since the signal-to-noise ratio of each series is held fixed, the total information increases with $N$. These results suggest that disaggregation improves state inference across a wide range of cross-sectional sample sizes in this empirically calibrated setting.

\begin{remark}For computational cost, when run on an Intel Xeon CPU (2.80 GHz, using 8 cores), the time to produce a total of 50K draws is as follows: 5.2 minutes for M = 60, 13.5 minutes for M = 100, 1.0 hour for M = 200, and 5.2 hours for M = 400. This shows that the procedure remains practical for M up to 400, which covers the majority of macroeconomic applications. We further document computational times in the application section.
\end{remark}

\section{Empirical application} \label{sec:application}

The NBER identifies US expansions and recessions based on a broad set of economic indicators, and its dating decisions are not tied to any single econometric model. Markov regime-switching model is arguably the most important
econometric model for identifying business cycles. It provides a
probabilistic statement about the state of the economy, including the
likelihood of a recession. Among existing studies, the number of time series
analyzed is often limited compared with the NBER's information set. Also, in most
cases, only aggregate variables are included in the analysis. This section
departs from the existing literature by applying the proposed models to disaggregated time series. As shown below, this approach yields sharper inference on the latent states than is typically available in the existing literature.

\subsection{Data structure}

This subsection describes the data and previews the model. We use five sets of disaggregated series: components of real GDP, industrial production by industry, capacity utilization by industry, nonfarm payroll employment by sector, and average weekly hours by sector. There are 94 series in total. The detailed sectoral and industry breakdowns are reported in the Appendix. We use seasonally adjusted quarterly data. Log differences are used to obtain percentage changes. 

Because some disaggregate series begin in 1972 (e.g., capacity utilization for wood products and for computer and electronic products), the sample spans 1972–2019. We first exclude the COVID period so that it does not dominate the estimation. We then extend the sample to include it, using winsorization to limit its influence on estimation (more details are given below).
\begin{figure}[H]
\centering
\caption{Cross Sectional Distribution of Disaggregate Series}
\vspace{-0.3cm} \includegraphics[width=0.8\linewidth, trim=0.9cm 5.5cm 1cm 10cm, clip]{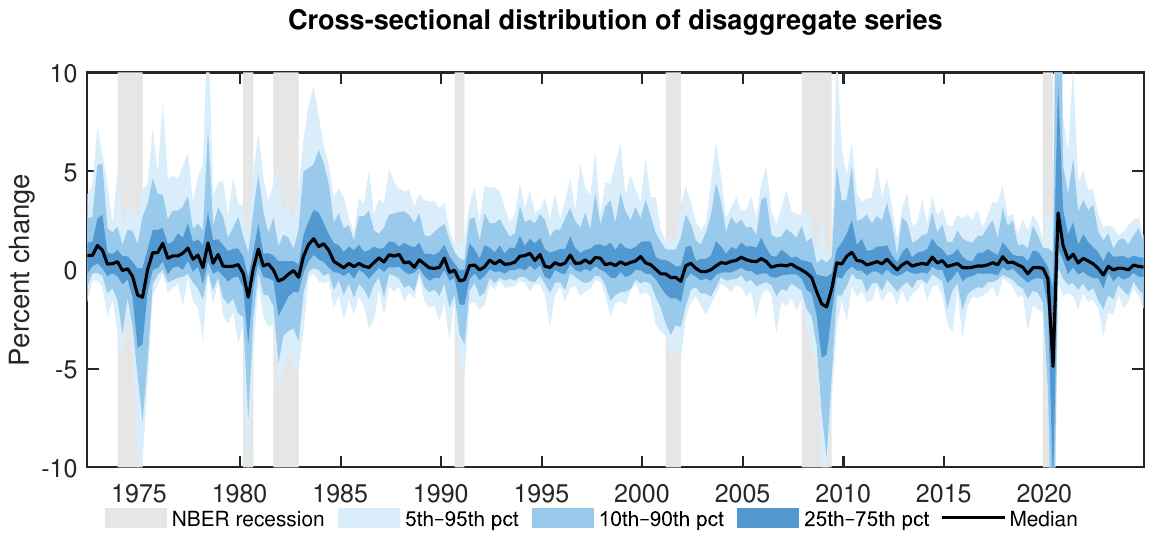}
\label{fig:Fanchart}
\vspace{-3.0cm}
\begin{minipage}{0.8\linewidth}
\singlespacing
\footnotesize\textit{Note:} The figure displays the cross-sectional distribution of 94 disaggregate series. Shaded areas represent the 5th--95th and 25th--75th percentile bands. The solid line denotes the cross-sectional median. Grey shaded regions indicate NBER recession periods.
\end{minipage}
\end{figure}

For illustration, Figure~\ref{fig:Fanchart} plots the cross-sectional distribution of the disaggregated series as a fan chart at each point in time, with shaded areas corresponding to NBER recessions. The series exhibit clear comovement, as shown by the quantiles relative to the median, indicating the presence of common cycles. The variability of the series is noticeably higher during NBER recessions than during expansions, and is particularly high during severe recessions. This confirms the importance of allowing the variances of the series to switch across regimes. From an identification viewpoint, it implies that changes in variances contain useful information for identifying business cycles, a point that is not always recognized in the literature.

We first estimate models without autoregressive components (Models A and B with lag order set to zero), first on the pre-COVID sample (1972:II–2019:III) and then on the full sample through 2024:IV. Then we consider models with lagged dependent variables, with the lag order set to one, and data divided into two groups: GDP components and the rest. The variance is always allowed to switch. As a benchmark, Figure~\ref{fig:Fred_recession} reports smoothed recession probabilities from FRED (series RECPROUSM156N) based on the model of Chauvet (1998), which is representative of estimates typically reported in the literature. These probabilities are at the quarterly frequency, constructed from monthly data using the end-of-period method.

\subsection{Model without autoregressive components}

Figure \ref{fig:application_nolag}(a) shows the smoothed recession probabilities together with NBER recession dates (peak to trough quarter) for the 1972-2019 sample. The model detects all NBER-defined recessions. With very few exceptions, the probabilities are either zero or one, showing high posterior certainty about the prevailing regime. This contrasts with the probabilities reported by the FRED series, which can linger at intermediate values for multiple periods. This difference confirms the informativeness of the 94 disaggregate series in identifying the latent state.
\begin{figure}[H]
    \centering

    \caption{Smoothed Recession Probability without Autoregressive Dynamics}
    \label{fig:application_nolag}
    \begin{minipage}[t]{0.5\textwidth}
        \centering
        {\small (a) 1972-2019}\\
        \includegraphics[width=\linewidth, clip, trim=2cm 9cm 0cm 10cm]{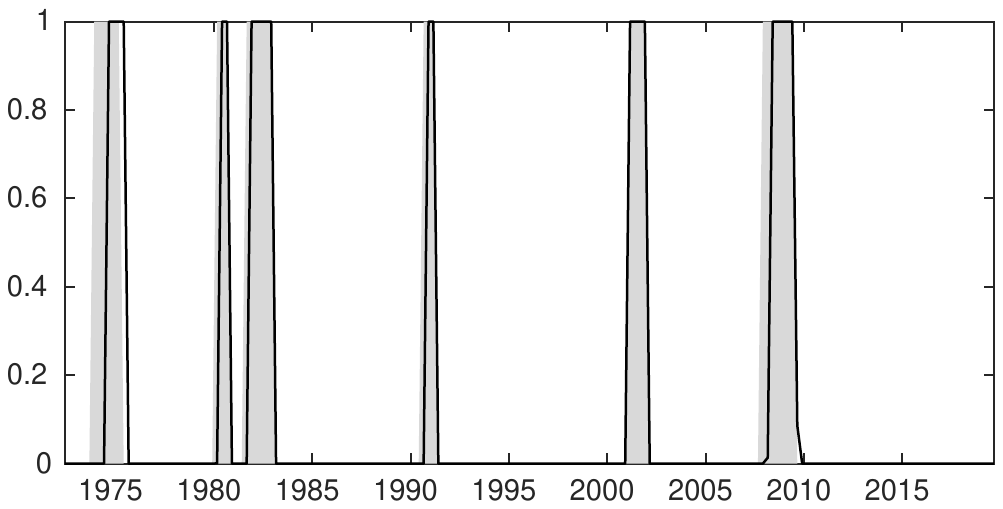}
    \end{minipage}%
    \begin{minipage}[t]{0.5\textwidth}
        \centering
        {\small (b) 1972-2024}\\
        \includegraphics[width=\linewidth, clip, trim=2cm 9cm 1cm 10cm]{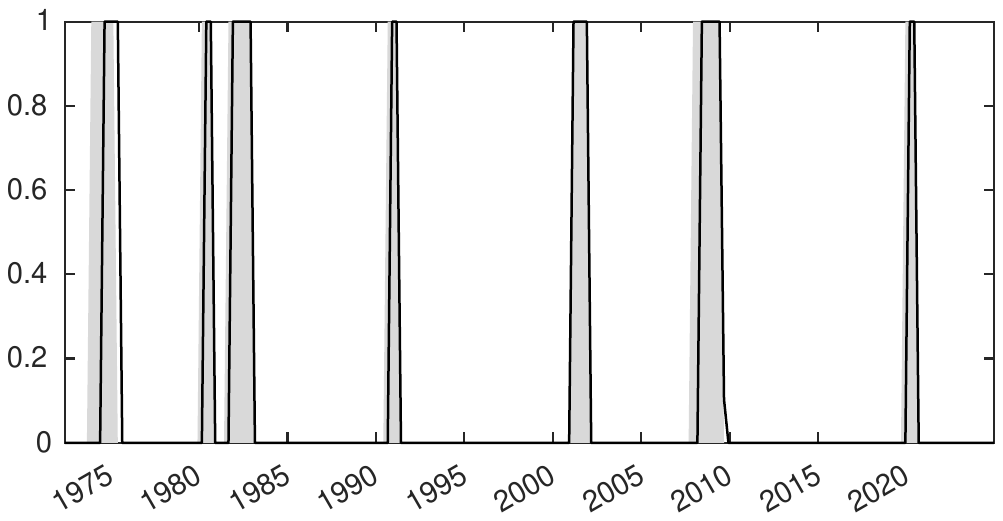}
    \end{minipage}

\vspace{-0.4cm}
\begin{minipage}{\linewidth}
\singlespacing
\footnotesize{ \textit{Note:} The disaggregate series are each modeled as
\(
Y_{i,t} = \delta_{i,1} \mathbf{1}_{\{S_t=1\}} + \mu_{i,2} + e_{i,t},
\)
with $e_{i,t} \sim \text{i.i.d. } N(0,\sigma_{i,1}^2)$ if $S_t=1$ and
$e_{i,t} \sim \text{i.i.d. } N(0,\sigma_{i,2}^2)$ if $S_t=2$, and assumed to be cross-sectionally independent.
The data are quarterly.
Smoothed probabilities are obtained via MCMC, averaging over 0.1 million posterior draws after discarding the first 0.1 million as burn-in.
NBER recession dates are shown as shaded regions.
The prior distributions are described in Section 3.5 and are the same across all applications.}
\end{minipage} 

\end{figure}

At recession onsets, the model dates are typically close to the NBER dates, with four of the six onsets identified within one quarter: the 1980, 1981--82, and 1990--91 recessions are each identified one quarter after the NBER onset; the 2001 recession onset is dated to 2001 Q1 by both. The two exceptions are the 1973--75 recession, where the probability switches to one in 1974 Q3, three quarters after the NBER onset of 1973 Q4, and the 2008--09 recession, where the probability reaches one in 2008 Q2, two quarters after the NBER onset of 2007 Q4. These onset lags may reflect the nature of the underlying data: the early quarters of a downturn are often gradual, and the model achieves high certainty only when a sufficient number of the 94 series have moved in a consistent direction.

The model provides sharp identification of recoveries. In five of the six recessions, the probability of the recession regime drops to (near) zero exactly when the NBER expansion begins. The only exception is the 1975 recovery, where the probability remains at one in 1975 Q2 and drops to zero in 1975 Q3, a one-quarter lag relative to NBER dating. This asymmetry (sometimes moderate delay at onsets, sharp identification of recoveries) is consistent with the view that recessions build gradually across sectors and series, while recoveries tend to be more synchronized and broad-based, which generates a clearer signal in the cross-section of disaggregate data.

We next include the COVID period in the sample. This period is associated with movements in many of the 94 disaggregate series that were extreme by historical standards. This raises the issue of how to handle extreme values. To address this, we winsorize observations in the post-2019 sample that fall outside the 1972–2019 range to the corresponding historical maximum or minimum. That is, if an observation in the extended period exceeds the historical maximum, it is replaced by that maximum; if it falls below the historical minimum, it is replaced by that minimum; otherwise, it is unchanged. The motivation is that for recession detection, extreme values carry little additional information beyond what is already conveyed by a large observation: once a series has moved far enough to be clearly consistent with a recession regime, trimming it has no practical effect on regime classification. At the same time, the adjustment can prevent a small number of pandemic observations from dominating the likelihood, so that parameter estimates would not be pulled away from values that fit the majority of the history. 

The results are shown in Figure~\ref{fig:application_nolag}(b). The results for the 1972-2019 period are essentially the same. For the COVID period, the NBER dates the recession from 2019 Q4\footnote{This is based on the NBER's quarterly dating. The committee noted that the monthly peak (February 2020) occurred in a different quarter (2020 Q1) than the quarterly peak; thus the dating differs across frequencies. See \url{https://www.nber.org/news/business-cycle-dating-committee-announcement-june-8-2020}.} through 2020 Q2. The model dates the recession from 2020 Q1 to 2020 Q2, a one-quarter onset lag in line with most earlier episodes. The post-2020 period through 2024 Q4 is assigned probability zero throughout, including the period of elevated inflation and monetary tightening in 2022-23 which some contemporaneous observers flagged as recession risk.

In summary, the model identifies all NBER-defined recessions in the sample, including the mild recessions of 1990--91 and 2001, the 2008--09 Great Recession, and the COVID recession. Given the simple model specification, this performance likely reflects the informativeness of the disaggregated data. We next turn to richer models that allow for autoregressive components.


\subsection{Models with autoregressive components}
We first consider model (\ref{eq:2.1}) and then (\ref{eq:2.3}), each on the subsample and then the full sample.
\subsubsection{Estimates under Model A}
The smoothed recession probabilities for the 1972--2019 sample are shown in Figure~\ref{fig:Figure_SpecificationA}(a) and can be compared with the baseline Figure~\ref{fig:application_nolag}(a). Again the model detects all NBER-defined recessions with high posterior certainty, and the onset and exit features are similar to the baseline. This reinforces the view that the cross-sectional information from the 94 disaggregate series is the main driver of regime identification.
\begin{figure}[H]
    \centering

    \caption{Smoothed Recession Probability with Autoregressive Dynamics (Model A)}
    \label{fig:Figure_SpecificationA}
    \begin{minipage}[t]{0.5\textwidth}
        \centering
        {\small (a) 1972-2019}\\
        \includegraphics[width=\linewidth, clip, trim=0cm 9cm 0cm 10cm]{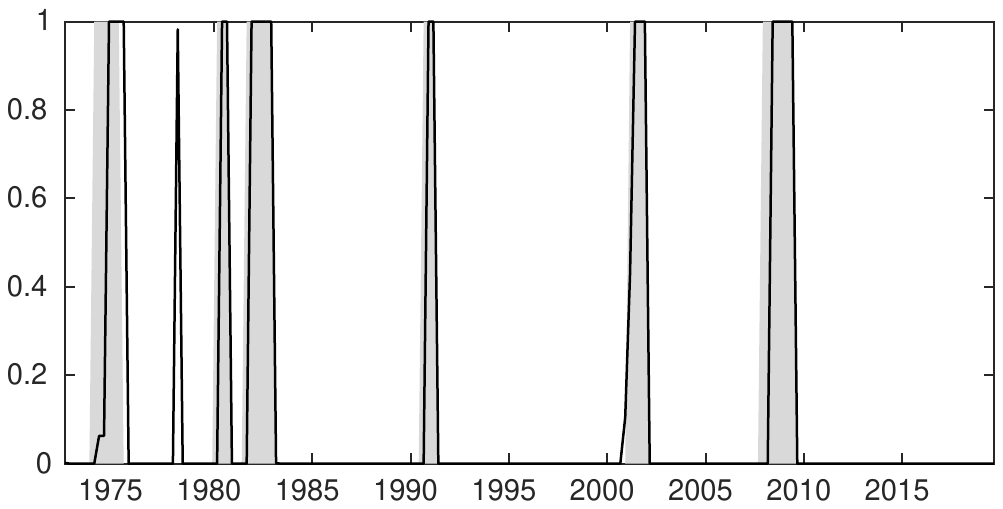}
    \end{minipage}%
    \begin{minipage}[t]{0.5\textwidth}
        \centering
        {\small (b) 1972-2024}\\
        \includegraphics[width=\linewidth, clip, trim=0cm 9cm 0cm 9.5cm]{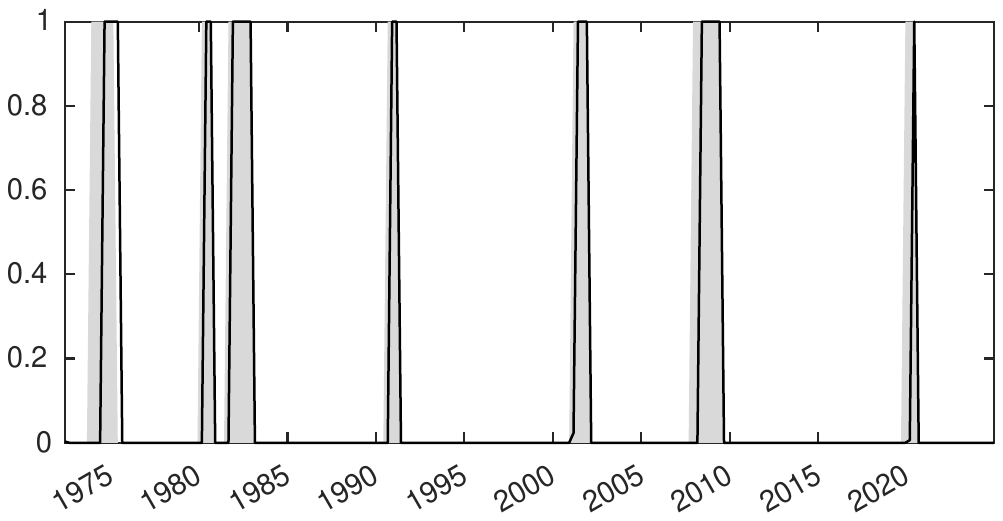}
    \end{minipage}

\vspace{-0.4cm}
\begin{minipage}{\linewidth}
\singlespacing
\footnotesize{ \textit{Note:} The disaggregate series are each modeled as Model A.
Smoothed probabilities are obtained via MCMC, averaging over 0.1 million draws from eight chains after discarding the first 0.1 million as burn-in. NBER recession dates are shown as shaded regions.
The priors are described in Section 3.5, and are the same across all applications.}
\end{minipage} 
\end{figure}
There are only three differences from the baseline case. First, the current model produces a recession probability of 0.982 in 1978 Q1, with no counterpart in the baseline. This likely reflects that, with the autoregressive dynamics absorbing some variation in the data, the model becomes more sensitive to short-term movements in the data. In 1978 Q1, industrial production momentum was somewhat subdued relative to 1975-77, and it was interpreted by the model as a brief recession signal. Second, the 2001 onset is somewhat slower: whereas the baseline model assigns probability one immediately in 2001 Q1, the current model produces only 0.431 that quarter, reaching one in 2001 Q2. Third, the 2008–09 recession exit is sharper: the probability drops to zero in 2009 Q3, with no mass around 0.1 as in the baseline case. Other than these, the onset and exit dates coincide with the baseline.

The results for the full sample are reported in Figure~\ref{fig:Figure_SpecificationA}(b) and can be compared with the baseline results in Figure~\ref{fig:application_nolag}(b). The smoothed probabilities are close for most of the sample, with only a few differences. The 2008-09 recession exit is now sharper: the baseline assigns a residual probability of 0.10 in 2009 Q3, while the current model assigns zero, matching the start of the NBER expansion. The 2001 onset is one quarter slower than the baseline, with probability 0.02 in 2001 Q1, reaching one in 2001 Q2. Similarly, the COVID-period onset is one quarter slower: the baseline reaches one in 2020 Q1 (a one-quarter lag relative to the NBER onset of 2019 Q4), while the current model assigns probability 0.01 in 2020 Q1 and reaches one in 2020 Q2. Overall, adding the AR dynamics leaves the recession dating conclusions largely unchanged.

\subsubsection{Estimates under Model B}

We next consider Model B with lag order one. The results for the 1972-2019 sample are reported in Figure~\ref{fig:Figure_SpecificationB}(a) and can be compared with the baseline and the Model A results discussed above.

The estimates remain close to the baseline, with all NBER-defined recessions identified and broadly similar timing. Compared with the baseline, Model B is more responsive to cyclical signals and the onset lags relative to NBER are shorter in several cases. The 1980 recession is identified in its first NBER quarter (1980 Q1) with probability 0.85, compared with near-zero in the other specifications. For 2008-09, although the recession probability still reaches one in 2008 Q2 (as do the other specifications), the model assigns probability 0.38 in 2008 Q1, closer to the NBER dating.

The trade-off is a tendency to assign elevated probabilities outside NBER recession dates:  0.56 in 1990 Q2 and 0.99 in 1981 Q1 during the short inter-recession expansion. The pattern suggests that Model B trades some posterior certainty for shorter onset lags over this period.


\begin{figure}[H]
    \centering

    \caption{Smoothed Recession Probability with Autoregressive Dynamics (Model B)}
    \label{fig:Figure_SpecificationB}
    \begin{minipage}[t]{0.5\textwidth}
        \centering
        {\small (a) 1972-2019}\\
        \includegraphics[width=\linewidth, clip, trim=0cm 9cm 0cm 10cm]{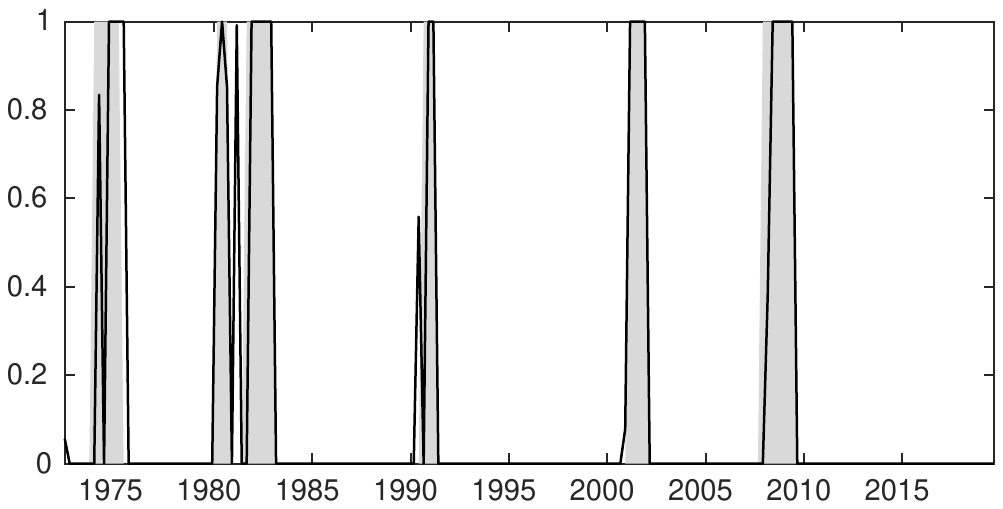}
    \end{minipage}%
    \begin{minipage}[t]{0.5\textwidth}
        \centering
        {\small (b) 1972-2024}\\
        \includegraphics[width=\linewidth, clip, trim=0cm 9cm 0cm 9.5cm]{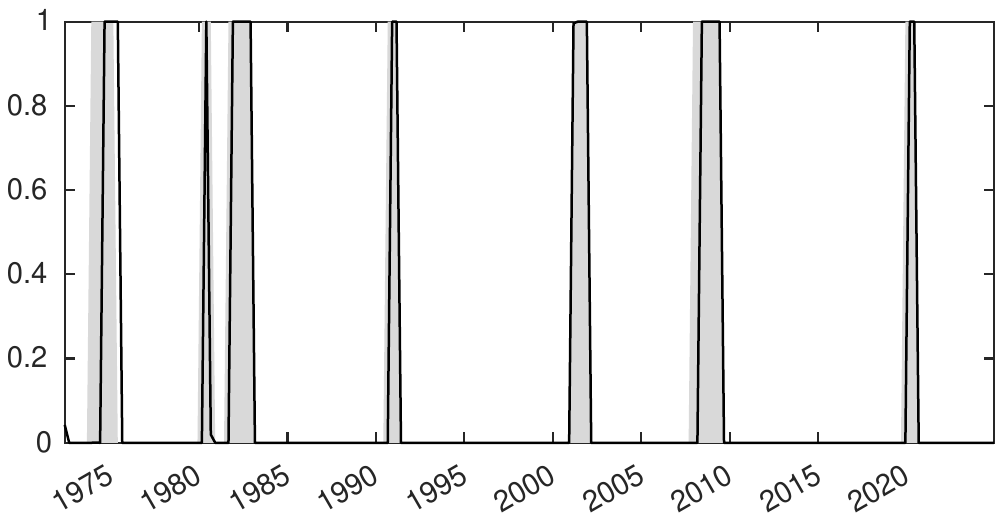}
    \end{minipage}

\vspace{-0.4cm}
\begin{minipage}{\linewidth}
\singlespacing
\footnotesize{ \textit{Note:} The disaggregate series are each modeled as Model B.
Smoothed probabilities are obtained via MCMC, averaging over 0.1 million draws after discarding the first 0.1 million as burn-in.
NBER recession dates are shown as shaded regions.
The priors are described in Section~3.5, and are the same across all applications.}
\end{minipage} 

\end{figure}

The results for the full sample are reported in Figure~\ref{fig:Figure_SpecificationB}(b). The estimates are close to the no-AR baseline (Figure~\ref{fig:application_nolag}(b)), with onset lags and exit dates nearly identical across recessions. The only exception is the 1980 exit, where the probability drops to zero in 1980 Q3, one quarter earlier than the baseline (which drops to zero in 1980 Q4, matching the start of the NBER expansion). Notably, the earlier-onset signals and elevated mid-expansion probabilities seen in the short sample (1974 Q1, 1981 Q1, 1990 Q2, and 2008 Q1) do not appear here. The more responsive behavior of Model B in the short sample appears to be sample-dependent: with the longer sample, the estimates move toward the no-AR baseline in Figure~\ref{fig:application_nolag}(b).

\subsubsection{Summary}

The above results show that across all model specifications and sample periods, the smoothed probabilities are close to zero or one with few intermediate values. This contrasts with typical recession probability estimates, which tend to linger at intermediate values for extended periods.

Models with autoregressive dynamics show higher sensitivity to cyclical signals, which in the short sample produces more variation: for Model A the elevated probability of 0.98 in 1978 Q1, and for Model B elevated probabilities of 0.83 in 1974 Q1, 0.56 in 1990 Q2, and 0.99 in 1981 Q1 during the short inter-recession expansion. Over the full sample, the estimates across specifications are quite consistent, with differences limited to only isolated quarters. The consistency across specifications confirms that the disaggregate data, rather than the model dynamics, drive regime identification.

The model dates either coincide with NBER onsets or lag them by one to three quarters depending on the episode, while exits coincide closely with the NBER expansion start in most cases. This asymmetry may be a reflection of the nature of recessions: downturns build gradually across sectors, while recoveries tend to generate a more synchronized signal across disaggregate series. The degree of agreement achieved is interesting: the model detects all NBER-defined recessions across more than five decades, despite the fact that NBER dating itself involves a significant judgmental component, which no purely statistical model is designed to replicate.

In terms of computational costs (on an Intel Xeon CPU at 2.80 GHz using 8 cores, with 8 chains producing a total of 0.2 million draws), the computational times are: 0.6 hours for the no-lag specification, 3.1 hours for Model A with one lag, and 0.6 hours for Model B with one lag. Model A takes the most time due to the additional state dependence. Together with the simulation cases reported above, these values confirm that the procedure is computationally feasible for empirically relevant model and sample sizes.


\section{Conclusion}

We showed that cross-sectional aggregation attenuates regime-switching signals, and that working with disaggregate data improves state inference under empirically relevant conditions. We then presented models and estimation methods for uncovering regime switching directly from disaggregated data. Applied to a U.S. macroeconomic dataset of 94 series, the models produce nearly binary recession probabilities across all specifications and sample periods, detecting all NBER-defined recessions from 1972 through 2024. The estimated recession onset dates either coincide with or lag the NBER by one to three quarters, while exits closely track the start of the NBER expansion. Applying the models to other datasets (such as European data) or adapting the sampler to related high-dimensional regime-switching models could be useful directions for future research.


\baselineskip=22pt

\bibliography{refs}      

\newpage 

\renewcommand{\thesection}{A.\arabic{section}}
\renewcommand{\thefigure}{A.\arabic{figure}}
\renewcommand{\thetable}{A.\arabic{table}}
\renewcommand{\theequation}{A.\arabic{equation}}
\setcounter{section}{0}
\setcounter{figure}{0}
\setcounter{table}{0}
\setcounter{equation}{0}
\section*{\centering \Huge{Appendix}}
\vspace{0.5cm}
This Appendix consists of four sections: (A.1) proofs of the results in the paper, (A.2) details on the sampler, (A.3) details on the data source and data structure, and (A.4) the figures and diagrams supporting the main paper and other parts of the Appendix.\section{Proofs}
\paragraph{Proof of Proposition 1.}\label{proof1}
We first prove the \textit{if} part of the property and then the \textit{only if} part.

\noindent{\bf{Proof of the ``if'' part:}}
Recall that $y_t=(y_{1,t},y_{2,t})'$ and $Y_t=y_{1,t}+y_{2,t}$. We begin with the initial period $t=1$ and then consider $t>1$ sequentially. As in \citet{hamilton1989}, at $t=1$ the filtered probabilities for the disaggregated model satisfy
\begin{equation*}
\hat{S}_{1|1}^{1}=\frac{f(y_{1}|S_{1}=1)\hat{S}_{1|0}^{1}}{f(y_{1}|S_{1}=1)\hat{S}_{1|0}^{1}+f(y_{1}|S_{1}=2)(1-\hat{S}_{1|0}^{1})}=\frac{\hat{S}_{1|0}^{1}}{\hat{S}_{1|0}^{1}+\frac{f(y_{1}|S_{1}=2)}{f(y_{1}|S_{1}=1)}\left( 1-\hat{S}_{1|0}^{1}\right)};
\end{equation*}
while for the aggregated model, 
\begin{equation*}
\Breve{S}_{1|1}^{1}=\frac{f(Y_{1}|S_{1}=1)\Breve{S}_{1|0}^{1}}{f(Y_{1}|S_{1}=1)\Breve{S}_{1|0}^{1}+f(Y_{1}|S_{1}=2)(1-\Breve{S}_{1|0}^{1})}=\frac{\Breve{S}_{1|0}^{1}}{\Breve{S}_{1|0}^{1}+\frac{f(Y_{1}|S_{1}=2)}{f(Y_{1}|S_{1}=1)}(1-\Breve{S}_{1|0}^{1})}.
\end{equation*}
Let 
\begin{equation*}
D_{1}=\frac{f(y_{1}|S_{1}=2)}{f(y_{1}|S_{1}=1)}\text{  and  }A_{1}=\frac{f(Y_{1}|S_{1}=2)}{f(Y_{1}|S_{1}=1)}
\end{equation*}
denote the likelihood ratios for the two models. Then
\begin{equation*}
\hat{S}_{1|1}^{1}=\frac{\hat{S}_{1|0}^{1}}{\hat{S}_{1|0}^{1}+D_{1}(1-\hat{S}_{1|0}^{1})}\text{  and  }\Breve{S}_{1|1}^{1}=\frac{\Breve{S}_{1|0}^{1}}{\Breve{S}_{1|0}^{1}+A_{1}(1-\Breve{S}_{1|0}^{1})}.
\end{equation*}
Since the two models share the same initial condition $p_1$, we get
\begin{equation*}
\hat{S}_{1|1}^{1}=\frac{p_{1}}{p_{1}+D_{1}(1-p_{1})}\text{ and }\Breve{S}_{1|1}^{1}=\frac{p_{1}}{p_{1}+A_{1}(1-p_{1})}.
\end{equation*}

By the cross-sectional independence, the explicit expressions for $D_{1}$ and $A_{1}$ are 
{
\begin{equation*}
D_{1}=\exp \left( -\frac{(\mu_{1,1}-\mu_{1,2})^{2}}{2\sigma_{1}^{2}}-\frac{(\mu_{2,1}-\mu_{2,2})^{2}}{2\sigma_{2}^{2}}\right) \exp \left( -\frac{(\mu_{1,1}-\mu_{1,2})}{\sigma_{1}^{2}}(y_{1,1}-\mu_{1,1})-\frac{(\mu_{2,1}-\mu_{2,2})}{\sigma_{2}^{2}}(y_{2,1}-\mu_{2,1})\right) .
\end{equation*}}
and
{
\begin{equation*}
A_{1}=\exp \left( -\frac{(\mu_{1,1}-\mu_{1,2}+\mu_{2,1}-\mu_{2,2})^{2}}{2\left( \sigma_{1}^{2}+\sigma_{2}^{2}\right) }\right) \exp \left( - \frac{(\mu_{1,1}-\mu_{1,2}+\mu_{2,1}-\mu_{2,2})}{\sigma_{1}^{2}+\sigma_{2}^{2}}(y_{1,1}+y_{2,1}-\mu_{1,1}-\mu_{2,1})\right).
\end{equation*}}
If $(\mu_{1,1}-\mu_{1,2})/\sigma_{1}^{2}=(\mu_{2,1}-\mu_{2,2})/\sigma_{2}^{2}=\kappa$, then
\begin{equation*}
D_{1}=A_{1}=\exp \left\{ -\frac{\kappa}{2}(\mu_{1,1}-\mu_{1,2}+\mu_{2,1}-\mu_{2,2})\right\} \exp \left\{ -\kappa(y_{1,1}+y_{2,1}-\mu_{1,1}-\mu_{2,1})\right\} .
\end{equation*}
Hence, $\hat{S}_{1|1}^{1}=\Breve{S}_{1|1}^{1}$, and so are
\begin{equation*}
\hat{P}_{1|1}^{1}=\hat{S}_{1|1}^{1}(1-\hat{S}_{1|1}^{1})\text{ and }\Breve{P}_{1|1}^{1}=\Breve{S}_{1|1}^{1}(1-\Breve{S}_{1|1}^{1}).
\end{equation*}

Now consider $t=2$. The prediction step for the disaggregate model yields (recall $p_{11}$ and $p_{21}$ are the transition probabilities of the Markov chain from state 1) 
\begin{eqnarray*}
\hat{S}_{2|1}^{1} &=&P(S_{2}=1|y_{1}), \\
&=&P(S_{2}=1|S_{1}=1,y_{1})P(S_{1}=1|y_{1})+P(S_{2}=1|S_{1}=2,y_{1})P(S_{1}=2|y_{1}), \\
&=&p_{11}P(S_{1}=1|y_{1})+p_{21}P(S_{1}=2|y_{1}), \\
&=&p_{11}\hat{S}_{1|1}^{1}+p_{21}(1-\hat{S}_{1|1}^{1}),
\end{eqnarray*}
and for the aggregate model, 
\begin{equation*}
\Breve{S}_{2|1}^{1}=P(S_{2}=1|Y_{1})=p_{11}\Breve{S}_{1|1}^{1}+p_{21}(1-\Breve{S}_{1|1}^{1}).
\end{equation*}
Since $\hat{S}_{1|1}^{1}=\Breve{S}_{1|1}^{1}$, it follows that $\hat{S}_{2|1}^{1}=\Breve{S}_{2|1}^{1}$. The filtering step at $t=2$ proceeds as at $t=1$: 
\begin{equation*}
\hat{S}_{2|2}^{1}=\frac{f(y_{2}|S_{2}=1)\hat{S}_{2|1}^{1}}{f(y_{2}|S_{2}=1) \hat{S}_{2|1}^{1}+f(y_{2}|S_{2}=2){(1-\hat{S}_{2|1}^{1})}}=\frac{\hat{S}_{2|1}^{1}}{\hat{S}_{2|1}^{1}+\frac{f(y_{2}|S_{2}=2)}{f(y_{2}|S_{2}=1)}(1 - \hat{S}_{2|1}^{1})},
\end{equation*}
and 
\begin{equation*}
\Breve{S}_{2|2}^{1}=\frac{f(Y_{2}|S_{2}=1)\Breve{S}_{2|1}^{1}}{f(Y_{2}|S_{2}=1)\Breve{S}_{2|1}^{1}+f(Y_{2}|S_{2}=2)(1 - \Breve{S}_{2|1}^{1})}=\frac{\Breve{S}_{2|1}^{1}}{\Breve{S}_{2|1}^{1}+\frac{f(Y_{2}|S_{2}=2)}{f(Y_{2}|S_{2}=1)}(1 - \Breve{S}_{2|1}^{1})}.
\end{equation*}
These are equal because $\hat{S}_{2|1}^{1}=\Breve{S}_{2|1}^{1}$ and the likelihood ratios satisfy $f(y_{2}|S_{2}=2)/f(y_{2}|S_{2}=1)=f(Y_{2}|S_{2}=2)/f(Y_{2}|S_{2}=1)$, for the same reason as in the $t=1$ case. Consequently, $\hat{P}_{2|2}^{1}=$ $\Breve{P}_{2|2}^{1}$. The cases with $t>2$ can be proved sequentially using the same argument. 

Now consider smoothed probabilities. We have shown that, for all $t\ge1$,
\[
\hat S^{1}_{t\mid t}=\Breve S^{1}_{t\mid t}
\qquad\text{and}\qquad
\hat S^{1}_{t+1\mid t}=\Breve S^{1}_{t+1\mid t}.
\]
We now show that this implies
\[
\hat S^{1}_{t\mid T}=\Breve S^{1}_{t\mid T}
\qquad\text{for all } t\le T.
\]

For a two-state Markov chain with transition probabilities
$p_{ij}=P(S_{t+1}=j\mid S_t=i)$, the standard backward smoothing recursion is
\begin{equation}
P(S_t=i\mid \mathcal F_T)
=
P(S_t=i\mid \mathcal F_t)
\sum_{j=1}^2
\frac{p_{ij}\,P(S_{t+1}=j\mid \mathcal F_T)}{P(S_{t+1}=j\mid \mathcal F_t)},
\qquad i=1,2,
\label{eq:smoother}
\end{equation}
where $\mathcal F_t$ is the information set for filtering. For the disaggregated model, \eqref{eq:smoother} implies
\begin{equation}
\hat S^{1}_{t\mid T}
=
\hat S^{1}_{t\mid t}
\left(
\frac{p_{11}\,\hat S^{1}_{t+1\mid T}}{\hat S^{1}_{t+1\mid t}}
+
\frac{p_{12}\,\hat S^{2}_{t+1\mid T}}{\hat S^{2}_{t+1\mid t}}
\right),
\label{eq:smoother_hat}
\end{equation}
and, for the aggregated model,
\begin{equation}
\Breve S^{1}_{t\mid T}
=
\Breve S^{1}_{t\mid t}
\left(
\frac{p_{11}\,\Breve S^{1}_{t+1\mid T}}{\Breve S^{1}_{t+1\mid t}}
+
\frac{p_{12}\,\Breve S^{2}_{t+1\mid T}}{\Breve S^{2}_{t+1\mid t}}
\right).
\label{eq:smoother_breve}
\end{equation}

We prove $\hat S^{1}_{t\mid T}=\Breve S^{1}_{t\mid T}$ by backward induction.
At $t=T$, smoothing equals filtering, so
\(
\hat S^{1}_{T\mid T}=\Breve S^{1}_{T\mid T}.
\)
Suppose that for some $t<T$,
\[
\hat S^{j}_{t+1\mid T}=\Breve S^{j}_{t+1\mid T},
\qquad j=1,2.
\]
Using \eqref{eq:smoother_hat}--\eqref{eq:smoother_breve} and the equalities above
\(
\hat S^{j}_{t\mid t}=\Breve S^{j}_{t\mid t}\) and
\(\hat S^{j}_{t+1\mid t}=\Breve S^{j}_{t+1\mid t}\) for $j=1,2$,
we obtain
\[
\hat S^{1}_{t\mid T}
=
\Breve S^{1}_{t\mid T}.
\]
By backward induction, this equality holds for all $t\le T$.

\noindent{\bf{Proof of the ``only if'' part:}}
Assume that the two models share the same initial condition
$\hat S^{1}_{1\mid 0}=\Breve S^{1}_{1\mid 0}=p_1$ with $p_1\in(0,1)$, and that
\[
\hat S^{1}_{t\mid t}=\Breve S^{1}_{t\mid t},\qquad
\hat S^{1}_{t\mid t-1}=\Breve S^{1}_{t\mid t-1},\qquad
\hat S^{1}_{t\mid T}=\Breve S^{1}_{t\mid T}
\quad\text{for all } t\ge1.
\]
We prove this implies
\begin{equation}
\frac{\mu_{1,1}-\mu_{1,2}}{\sigma_1^2}
=
\frac{\mu_{2,1}-\mu_{2,2}}{\sigma_2^2}.
\label{eq:cond_onlyif}
\end{equation}

\medskip\noindent
\noindent{Step 1 (Equality of filtered probabilities implies equality of likelihood ratios).}
At $t=1$, as shown in the ``if" part of the proof, for the the disaggregated model, we have
\[
\hat S^{1}_{1\mid 1}
=
\frac{p_1}{p_1+D_1(1-p_1)},
\qquad
D_1\equiv \frac{f(y_1\mid S_1=2)}{f(y_1\mid S_1=1)},
\]
and for the aggregated model, we have
\[
\Breve S^{1}_{1\mid 1}
=
\frac{p_1}{p_1+A_1(1-p_1)},
\qquad
A_1\equiv \frac{f(Y_1\mid S_1=2)}{f(Y_1\mid S_1=1)}.
\]
By assumption, $\hat S^{1}_{1\mid 1}=\Breve S^{1}_{1\mid 1}$. Since $p_1\in(0,1)$, the map
\[
g(x)\equiv \frac{p_1}{p_1+x(1-p_1)},\qquad x>0,
\]
is strictly decreasing and hence one-to-one. Therefore
\begin{equation}
D_1=A_1.
\label{eq:D1A1}
\end{equation}

\medskip\noindent
\noindent{Step 2 (Compute the log-likelihood ratios).}
Write $\delta_1\equiv \mu_{1,1}-\mu_{1,2}$ and $\delta_2\equiv \mu_{2,1}-\mu_{2,2}$.
A direct calculation yields
\begin{equation}
\log D_1
=
-\frac{\delta_1}{\sigma_1^2}\,y_{1,1}
-\frac{\delta_2}{\sigma_2^2}\,y_{2,1}
+
\frac{\delta_1}{2\sigma_1^2}\,(\mu_{1,1}+\mu_{1,2})
+
\frac{\delta_2}{2\sigma_2^2}\,(\mu_{2,1}+\mu_{2,2}).
\label{eq:logD1}
\end{equation}
Equivalently, $\log D_1$ is affine in $(y_{1,1},y_{2,1})$ with slopes
$\delta_1/\sigma_1^2$ and $\delta_2/\sigma_2^2$.

For the aggregated model, the same calculation gives
\begin{equation}
\log A_1
=
-\frac{\delta_1+\delta_2}{\sigma^2}\,(y_{1,1}+y_{2,2})
+
\frac{\delta}{2\sigma^2}\,
\bigl[(\mu_{1,1}+\mu_{2,1})+(\mu_{1,2}+\mu_{2,2})\bigr].
\label{eq:logA1}
\end{equation}
Thus $\log A_1$ is affine in $(y_{1,1},y_{2,1})$ with \emph{the same slope}
on $y_{1,1}$ and $y_{2,1}$, namely $(\delta_1+\delta_2)/(\sigma_1^2+\sigma_2^2)$.

\medskip\noindent
\noindent{Step 3 (Equality of ratios for all realizations implies equality of slopes).}
By \eqref{eq:D1A1}, we have $\log D_1=\log A_1$.
Both sides are affine functions of $(y_{1,1},y_{2,1})$.
Equality of two affine functions on $\mathbb R^2$ implies equality of their coefficients.
Comparing the coefficients on $y_{1,1}$ and $y_{2,1}$ yields
\begin{equation}
\frac{\delta_1}{\sigma_1^2}
=
\frac{\delta_1+\delta_2}{\sigma_1^2+\sigma_2^2},
\qquad
\frac{\delta_2}{\sigma_2^2}
=
\frac{\delta_1+\delta_2}{\sigma_1^2+\sigma_2^2}.
\label{eq:compare_slopes}
\end{equation}
Either identity in \eqref{eq:compare_slopes} implies the desired condition. For example,
from the first equality,
\[
\frac{\delta_1}{\sigma_1^2}(\sigma_1^2+\sigma_2^2)=\delta_1+\delta_2
\quad\Longrightarrow\quad
\delta_1\frac{\sigma_2^2}{\sigma_1^2}=\delta_2
\quad\Longrightarrow\quad
\frac{\delta_1}{\sigma_1^2}=\frac{\delta_2}{\sigma_2^2},
\]
which is \eqref{eq:cond_onlyif}.
This completes the proof of the ``only if'' direction of Proposition 1.\ $\blacksquare $

\noindent{\textbf{Proof of Corollary 1}.} Let $I_t=\mathbf 1_{\{S_t=1\}}$ and define disaggregate and aggregate information sets
\[
\mathcal F_t^d=\sigma(y_1,\ldots,y_t),
\qquad
\mathcal F_t^a=\sigma(Y_1,\ldots,Y_t),
\qquad
Y_t=y_{1,t}+y_{2,t}.
\]
Since both models are without misspecification, we have
\[
\hat S^1_{t\mid t}=E(I_t\mid \mathcal F_t^d),
\qquad
\Breve S^1_{t\mid t}=E(I_t\mid \mathcal F_t^a).
\]
Since $Y_s$ is a measurable function of $y_s$, we have
$\mathcal F_t^a\subseteq \mathcal F_t^d$. By the law of total variance,
\[
\operatorname{Var}(I_t\mid\mathcal F_t^a)
=
E\!\left[\operatorname{Var}(I_t\mid\mathcal F_t^d)\mid\mathcal F_t^a\right]
+
\operatorname{Var}\!\left(E(I_t\mid\mathcal F_t^d)\mid\mathcal F_t^a\right).
\]
Taking expectations yields
\begin{equation}
E\!\left[\operatorname{Var}(I_t\mid\mathcal F_t^a)\right]
=
E\!\left[\operatorname{Var}(I_t\mid\mathcal F_t^d)\right]
+
E\!\left[\operatorname{Var}\!\left(\hat S^1_{t\mid t}\mid\mathcal F_t^a\right)\right]
\;\ge\;
E\!\left[\operatorname{Var}(I_t\mid\mathcal F_t^d)\right].
\label{eq:total_variance_decomp}
\end{equation}
Since $I_t$ is Bernoulli given either $\sigma$-field,
\(
\operatorname{Var}(I_t|\mathcal F)
=
E(I_t|\mathcal F)(1-E(I_t|\mathcal F)),
\)
implying \eqref{eq:aggregation_loss}. 

Because $\hat S^1_{t\mid t}$ is nonnegative, equality in (\ref{eq:total_variance_decomp}) holds if and only if
\[
\operatorname{Var}\!\left(\hat S^1_{t\mid t}\mid\mathcal F_t^a\right)=0
\quad \text{a.s.},
\]
that is, if and only if $\hat S^1_{t\mid t}$ is $\mathcal F_t^a$-measurable. Since
\[
E(\hat S^1_{t\mid t}\mid\mathcal F_t^a)
=
E(I_t\mid\mathcal F_t^a)
=
\Breve S^1_{t\mid t},
\]
this condition is equivalent to
$\hat S^1_{t\mid t}=\Breve S^1_{t\mid t}$ almost surely.

Proposition 1 shows that
$\hat S^1_{t\mid t}=\Breve S^1_{t\mid t}$ for all $t\ge1$ if and only if
 \eqref{cond-1} holds. If  \eqref{cond-1} fails,
then already at $t=1$ the filtered probabilities differ with positive probability,
so
\[
\operatorname{Var}\!\left(\hat S^1_{1\mid 1}\mid\mathcal F_1^a\right)>0,
\]
and the inequality in \eqref{eq:aggregation_loss} is strict. 
This completes the proof Corollary 1.\ $\blacksquare $

\noindent{\bf{Proof of Proposition 2.}}\label{proof2} 
As in the switching in means case, we write 
\begin{equation*}
\hat{S}_{1|1}^{1}=\frac{p_{1}}{p_{1}+D_{1}(1-p_{1})}\text{ and }\Breve{S}_{1|1}^{1}=\frac{p_{1}}{p_{1}+A_{1}(1-p_{1})}.
\end{equation*}
The explicit expressions for $D_{1}$ and $A_{1}$ are now
\begin{eqnarray*}
D_{1} &=&\frac{f(y_{1}|S_{1}=2)}{f(y_{1}|S_{1}=1)}=\frac{\frac{1}{\sigma_{1,2}}\exp \left( -\frac{(y_{1,1}-\mu_{1})^{2}}{2\sigma_{1,2}^{2}}\right) \frac{1}{\sigma_{2,2}}\exp \left( -\frac{(y_{2,1}-\mu _{2})^{2}}{2\sigma_{2,2}^{2}}\right) }{\frac{1}{\sigma_{1,1}}\exp \left( -\frac{(y_{1,1}-\mu_{1})^{2}}{2\sigma_{1,1}^{2}}\right) \frac{1}{\sigma_{2,1}}\exp \left( -\frac{(y_{2,1}-\mu_{2})^{2}}{2\sigma_{2,1}^{2}}\right) }, \\
A_{1} &=&\frac{f(Y_{1}|S_{1}=2)}{f(Y_{1}|S_{1}=1)}=\frac{\frac{1}{\sqrt{\sigma_{1,2}^{2}+\sigma_{2,2}^{2}}}\exp \left( -\frac{(y_{1,1}+y_{2,1}-\mu_{1}-\mu_{2})^{2}}{2\left( \sigma _{1,2}^{2}+\sigma_{2,2}^{2}\right) }\right) }{\frac{1}{\sqrt{\sigma _{1,1}^{2}+\sigma_{2,1}^{2}}}\exp \left( - \frac{(y_{1,1}+y_{2,1}-\mu_{1}-\mu_{2})^{2}}{2\left( \sigma_{1,1}^{2}+\sigma_{2,1}^{2}\right) }\right) }.
\end{eqnarray*}
To simplify the notation, we define $\tilde{y}_{1,1}=y_{1,1}-\mu _{1}$ and $\tilde{y}_{2,1}=y_{2,1}-\mu _{2}$. Then,
\begin{eqnarray*}
D_{1} &=&\frac{\sigma _{1,1}\sigma _{2,1}}{\sigma _{1,2}\sigma _{2,2}}\exp \left( -\frac{\tilde{y}_{1,1}^{2}}{2\sigma _{1,2}^{2}}+\frac{\tilde{y}_{1,1}^{2}}{2\sigma _{1,1}^{2}}-\frac{\tilde{y}_{2,1}^{2}}{2\sigma _{2,2}^{2}}+\frac{\tilde{y}_{2,1}^{2}}{2\sigma _{2,1}^{2}}\right) , \\
A_{1} &=&\frac{\sqrt{\sigma _{1,1}^{2}+\sigma _{2,1}^{2}}}{\sqrt{\sigma_{1,2}^{2}+\sigma _{2,2}^{2}}}\exp \left( -\frac{(\tilde{y}_{1,1}+\tilde{y}_{2,1})^{2}}{2\left( \sigma _{1,2}^{2}+\sigma _{2,2}^{2}\right) }+\frac{(\tilde{y}_{1,1}+\tilde{y}_{2,1})^{2}}{2\left( \sigma_{1,1}^{2}+\sigma_{2,1}^{2}\right) }\right) .
\end{eqnarray*}
This implies
\begin{eqnarray*}
D_{1} &=&\frac{\sigma_{1,1}\sigma_{2,1}}{\sigma_{1,2}\sigma_{2,2}}\exp \left( -\frac{\tilde{y}_{1,1}^{2}}{2\sigma_{1,2}^{2}}+\frac{\tilde{y}_{1,1}^{2}}{2\sigma_{1,1}^{2}}-\frac{\tilde{y}_{2,1}^{2}}{2\sigma_{2,2}^{2}}+\frac{\tilde{y}_{2,1}^{2}}{2\sigma_{2,1}^{2}}\right) , \\
A_{1} &=&\frac{\sqrt{\sigma_{1,1}^{2}+\sigma_{2,1}^{2}}}{\sqrt{\sigma_{1,2}^{2}+\sigma _{2,2}^{2}}} \\
&\times &\exp \left( -\frac{\tilde{y}_{1,1}^{2}+\tilde{y}_{2,1}^{2}}{2\left( \sigma _{1,2}^{2}+\sigma_{2,2}^{2}\right) }+\frac{\tilde{y}_{1,1}^{2}+\tilde{y}_{2,1}^{2}}{2\left( \sigma_{1,1}^{2}+\sigma_{2,1}^{2}\right) }-\frac{\tilde{y}_{1,1}\tilde{y}_{2,1}}{\sigma_{1,2}^{2}+\sigma_{2,2}^{2}}+\frac{\tilde{y}_{1,1}\tilde{y}_{2,1}}{
\sigma_{1,1}^{2}+\sigma _{2,1}^{2}}\right).
\end{eqnarray*}
More explicitly,
\begin{eqnarray*}
D_{1} &=&\frac{\sigma_{1,1}\sigma_{2,1}}{\sigma_{1,2}\sigma _{2,2}}\exp \left( \tilde{y}_{1,1}^{2}\left(\frac{1}{2\sigma_{1,1}^{2}}-\frac{1}{2\sigma_{1,2}^{2}}\right) +\tilde{y}_{2,1}^{2}\left( \frac{1}{2\sigma_{2,1}^{2}}-\frac{1}{2\sigma_{2,2}^{2}}\right) \right) , \\
A_{1} &=&\frac{\sqrt{\sigma_{1,1}^{2}+\sigma_{2,1}^{2}}}{\sqrt{\sigma_{1,2}^{2}+\sigma_{2,2}^{2}}}\exp \left\{ \tilde{y}_{1,1}^{2}\left( \frac{1}{2\left( \sigma_{1,1}^{2}+\sigma_{2,1}^{2}\right) }-\frac{1}{2\left( \sigma_{1,2}^{2}+\sigma_{2,2}^{2}\right) }\right) \right. \\
&&\left. +\tilde{y}_{2,1}^{2}\left( \frac{1}{2\left( \sigma_{1,1}^{2}+\sigma_{2,1}^{2}\right) }-\frac{1}{2\left( \sigma_{1,2}^{2}+\sigma_{2,2}^{2}\right) }\right)\right. \\
&&\left. +\tilde{y}_{1,1}\tilde{y}_{2,1}\left( \frac{1}{\sigma_{1,1}^{2}+\sigma_{2,1}^{2}}-\frac{1}{\sigma_{1,2}^{2}+\sigma_{2,2}^{2}}\right) \right\} .
\end{eqnarray*}
Because $(\tilde{y}_{1,1},\tilde{y}_{2,1})$ has a non-degenerate continuous
distribution, the equality $D_{1}=A_{1}$ almost surely requires the coefficient
on $\tilde{y}_{1,1}\tilde{y}_{2,1}$ in $\log A_{1}$ to be zero. Hence
\[
\sigma_{1,1}^{2}+\sigma_{2,1}^{2}
=
\sigma_{1,2}^{2}+\sigma_{2,2}^{2}.
\] Under this condition, the coefficients of $\tilde{y}_{1,1}^{2}$ and
$\tilde{y}_{2,1}^{2}$ in $A_{1}$ also vanish. For $D_{1}=A_{1}$ to hold,
the corresponding coefficients in $D_{1}$ must be zero as well, which implies
$\sigma_{1,1}=\sigma_{1,2}$ and $\sigma_{2,1}=\sigma_{2,2}$. Therefore, there is
no regime switching in the variances of the data. This proves the result. $\blacksquare$

\noindent{\bf{Proof of Corollary 2}}.
Let $I_t=\mathbf{1}_{\{S_t=1\}}$ and define the information sets
\[
\mathcal{F}_t^{(d)}=\sigma\!\bigl((y_{1,s},y_{2,s})_{1\le s\le t}\bigr),
\qquad
\mathcal{F}_t^{(a)}=\sigma\!\bigl((Y_s)_{1\le s\le t}\bigr),
\]
so that $\mathcal{F}_t^{(a)}\subseteq \mathcal{F}_t^{(d)}$ and
$\hat S^1_{t\mid t}=E(I_t\mid \mathcal{F}_t^{(d)})$, $\Breve S^1_{t\mid t}=E(I_t\mid \mathcal{F}_t^{(a)})$.
Since $I_t$ is Bernoulli conditional on either information set,
\[
Var(I_t\mid \mathcal{F}_t^{(d)})=\hat S^1_{t\mid t}\bigl(1-\hat S^1_{t\mid t}\bigr),
\qquad
Var(I_t\mid \mathcal{F}_t^{(a)})=\Breve S^1_{t\mid t}\bigl(1-\Breve S^1_{t\mid t}\bigr).
\]
By the law of total variance with $\mathcal{F}_t^{(a)}\subseteq \mathcal{F}_t^{(d)}$,
\[
Var(I_t\mid \mathcal{F}_t^{(a)})
=
E\!\bigl[Var(I_t\mid \mathcal{F}_t^{(d)})\mid \mathcal{F}_t^{(a)}\bigr]
+
Var\!\bigl(E(I_t\mid \mathcal{F}_t^{(d)})\mid \mathcal{F}_t^{(a)}\bigr),
\]
and taking expectations yields \eqref{eq:aggregation_loss_var_weak}. Because second term on the
right-hand side is nonnegative, the reversed inequality is impossible for any $t$.

If $\sigma_{1,1}^{2}\neq\sigma_{1,2}^{2}$ or $\sigma_{2,1}^{2}\neq\sigma_{2,2}^{2}$, then
Proposition~2 implies $\hat S^1_{1\mid 1}\neq \Breve S^1_{1\mid 1}$ with probability one.
Hence $\hat S^1_{1\mid 1}$ cannot be a function of $Y_1$ alone, which implies
$Var(\hat S^1_{1\mid 1}\mid \mathcal{F}_1^{(a)})>0$ with positive probability. Thus the
inequality is strict at $t=1$. This proves the result. $\blacksquare$
\section{Additional Details on the Sampler}
This section follows the same structure as Section 3 of the main paper. Without loss of generality, we assume all series belong to one group, so that $J=1$. This allows us to drop the $j$ index from all expressions.
\vspace{-0.3cm}
\subsection{Sampling regression coefficients and variances}
\subsubsection{Sampling state-dependent means}
Here all distributions are understood to be conditional on
$\{\phi_m\}_{m=1}^k$, $\{\sigma_{i,s}^2\}$, and $S_{1:T}$. 
First, we vectorize the observation equations. Let
\[
\tilde y_{i,t}
=
y_{i,t}-\sum_{m=1}^{k}\phi_{m}y_{i,t-m},
\qquad
z_t
=
\mathbf{1}_{\{S_{t}=1\}}
-
\sum_{m=1}^{k}\phi_{m}\mathbf{1}_{\{S_{t-m}=1\}},
\qquad
c
=
1-\sum_{m=1}^{k}\phi_m .
\]
Then the model can be written as
\[
\tilde y_{i,t}
=
\delta_{i,1} z_t + \mu_{i,2} c + e_{i,t}.
\]
Define
\begin{equation*}
Y_{(NT\times 1)}=%
\begin{pmatrix}
\tilde y_{1} \\ 
\tilde y_{2} \\ 
\vdots \\ 
\tilde y_{N}%
\end{pmatrix}%
,\quad
X_{(NT\times 2N)}=I_{(N\times N)}\otimes 
\begin{pmatrix}
z_{1} & c \\ 
z_{2} & c \\ 
\vdots & \vdots \\ 
z_{T} & c%
\end{pmatrix}%
,\quad
\mu_{(2N\times 1)}=%
\begin{pmatrix}
\delta_{1,1} \\ 
\mu_{1,2} \\ 
\vdots \\ 
\delta_{N,1} \\ 
\mu_{N,2}%
\end{pmatrix}%
,\quad
U_{(NT\times 1)}=%
\begin{pmatrix}
e_{1} \\ 
e_{2} \\ 
\vdots \\ 
e_{N}%
\end{pmatrix}%
,
\end{equation*}%
where, for each \(i\), \(\tilde y_i=(\tilde y_{i,1},\ldots,\tilde y_{i,T})'\) and \(e_i=(e_{i,1},\ldots,e_{i,T})'\).
Then,
\[
Y=X\mu+U.
\]
Because $e_{i,t}\sim N(0,\sigma _{i,S_{t}}^{2})$, the variance covariance
matrix of the model is
\begin{equation*}
E(UU^{\prime })=V_{1}\otimes \text{diag}\{\mathbf{1}_{\{S_{1}=1\}},...,%
\mathbf{1}_{\{S_{T}=1\}}\}+V_{2}\otimes \text{diag}\{\mathbf{1}%
_{\{S_{1}=2\}},...,\mathbf{1}_{\{S_{T}=2\}}\}\equiv V_{(NT\times NT)},
\end{equation*}%
where 
\begin{equation*}
V_{1}=%
\begin{pmatrix}
\sigma _{1,1}^{2} & 0 & \cdots & 0 \\ 
0 & \sigma _{2,1}^{2} & \cdots & 0 \\ 
\vdots & \vdots & \ddots & \vdots \\ 
0 & 0 & \cdots & \sigma _{N,1}^{2}%
\end{pmatrix}%
,\text{ and }V_{2}=%
\begin{pmatrix}
\sigma _{1,2}^{2} & 0 & \cdots & 0 \\ 
0 & \sigma _{2,2}^{2} & \cdots & 0 \\ 
\vdots & \vdots & \ddots & \vdots \\ 
0 & 0 & \cdots & \sigma _{N,2}^{2}%
\end{pmatrix}%
.
\end{equation*}%
Next, pre-multiply the vectorized model by $V^{-1/2}$. Then,
we have the following transformed observation equation 
\begin{equation*}
Y^{\ast }=X^{\ast }\mu +U^{\ast },
\end{equation*}%
where 
\begin{equation*}
Y^{\ast }=V^{-1/2}Y,X^{\ast }=V^{-1/2}X,U^{\ast }=V^{-1/2}U.
\end{equation*}%
and 
\begin{equation*}
E(U^{\ast }U^{^{\prime }\ast })=I_{(NT\times NT)}.
\end{equation*}%

If the prior distribution of $\mu$ is specified as a multivariate normal with
mean $\mu^{\ast}$ and covariance matrix $V^{\ast}$, subject to the sign
restrictions $\delta_{i,1}\le 0$ for $i=1,\ldots,N$, that is,
\[
\mu \sim \mathcal{TN}_{(-\infty,0]^N}\bigl(\mu^{\ast},V^{\ast}\bigr),
\]
the posterior distribution is
\[
\mu \mid y_{1:T}
\sim \mathcal{TN}_{(-\infty,0]^N}\bigl(\mu_{\ast},V_{\ast}\bigr),
\]
where
\begin{align*}
V_{\ast}
&=
\bigl[V^{\ast -1}+X^{\ast\prime}X^{\ast}\bigr]^{-1},
\\
\mu_{\ast}
&=
V_{\ast}\bigl[V^{\ast -1}\mu^{\ast}+X^{\ast\prime}Y^{\ast}\bigr].
\end{align*}
Posterior simulation can be carried out by drawing from the truncated multivariate
normal distribution using a minimax-tilting accept–reject sampler, which
efficiently enforces the linear inequality constraints.
\subsubsection{Sampling the variances}
Here all distributions are conditional on the latent states and the regression coefficients.
\subsubsection*{Case 1. The variances are not allowed to switch}

Since we assume the series are conditionally independent, and variance covariance matrix
is diagonal, we can draw $\sigma _{i}^{2}$ one series at a time. As common in this literature, we use independent inverse Gamma priors: 
\begin{equation*}
\sigma _{i}^{2}\sim IG(d_{i,0}/2,v_{i,0}/2),i=1,2,...,N.
\end{equation*}%
The posterior distribution is then given by 
\begin{equation*}
\sigma _{i}^{2}|{y}_{1:T}\sim IG(d_{i,1}/2,v_{i,1}/2),
\end{equation*}%
where 
\begin{equation*}
d_{i,1}=d_{i,0}+T,\text{ and }v_{i,1}=v_{i,0}+SSE_{i},
\end{equation*}%
where $SSE_{i}$ is the the sum of
squared residuals for time series $i$:
\begin{equation*}
SSE_i
=
\sum_{t=1}^{T}
\left[
y_{i,t}
-
\mu_{i,S_t}
-
\sum_{m=1}^{k}\phi_m\bigl(y_{i,t-m}-\mu_{i,S_{t-m}}\bigr)
\right]^2 .
\end{equation*}
For each $i$, the sampling is a one dimensional problem not involving any matrix inversion.

\subsubsection*{Case 2. The variances are allowed to switch}
For each $i$, the sampling is identical to that in \citet{kim1999}. Note that for Model~A,
\[e_{i,t}=y_{i,t}-
\mu_{i,S_t}
-
\sum_{m=1}^{k}\phi_m\bigl(y_{i,t-m}-\mu_{i,S_{t-m}}\bigr),
\]
with conditional distribution
\[
e_{i,t}
\sim
i.i.d.\ N\!\left(0,\,
\sigma_{i,1}^{2}\mathbf{1}_{\{S_t=1\}}
+
\sigma_{i,2}^{2}\mathbf{1}_{\{S_t=2\}}
\right).
\]
The regime-dependent variances can be written as
\begin{equation*}
\sigma_{i,S_t}^{2}
=
\sigma_{i,2}^{2}
\left[
1+
\left(
\frac{\sigma_{i,1}^{2}}{\sigma_{i,2}^{2}}-1
\right)\mathbf{1}_{\{S_t=1\}}
\right]
=
\sigma_{i,2}^{2}\bigl[1+h_{i,2}\mathbf{1}_{\{S_t=1\}}\bigr],
\end{equation*}
where $h_{i,2}=\sigma_{i,1}^{2}/\sigma_{i,2}^{2}-1$. We can sample $\sigma_{i,2}^{2}$ given $h_{i,2}$, and then $h_{i,2}+1$, equivalently $\sigma_{i,1}^{2}$,  given  $\sigma_{i,2}^{2}$ as in \citet{kim1999}.

\paragraph{\textit{(i) Generate }$\sigma_{i,2}^{2}$.}

To generate $\sigma_{i,2}^{2}$, we condition on 
$h_{i,2}\mathbf{1}_{\{S_t=1\}}$ and work with the transformed regression of Model A:
\begin{equation*}
\frac{y_{i,t}}{\sqrt{1+h_{i,2}\mathbf{1}_{\{S_t=1\}}}}
=
\frac{
\mu_{i,S_t}
+
\sum_{m=1}^{k}\phi_m\bigl(y_{i,t-m}-\mu_{i,S_{t-m}}\bigr)
}
{\sqrt{1+h_{i,2}\mathbf{1}_{\{S_t=1\}}}}
+
\frac{e_{i,t}}{\sqrt{1+h_{i,2}\mathbf{1}_{\{S_t=1\}}}} .
\end{equation*}
In this transformed regression, the error term
$e_{i,t}/\sqrt{1+h_{i,2}\mathbf{1}_{\{S_t=1\}}}$ is i.i.d.\ $N(0,\sigma_{i,2}^{2})$.

With the prior distribution
\begin{equation*}
\sigma_{i,2}^{2}\sim \mathrm{IG}\!\left(\frac{d_{i,0}}{2},\frac{v_{i,0}}{2}\right).
\end{equation*}
the posterior distribution is
\begin{equation*}
\sigma_{i,2}^{2}\mid h_{i,2},y_{1:T}
\sim
\mathrm{IG}\!\left(\frac{d_{i,2}}{2},\frac{v_{i,2}}{2}\right),
\end{equation*}
where
\begin{equation*}
d_{i,2}=d_{i,0}+T,
\qquad
v_{i,2}=v_{i,0}+SSE_i,
\end{equation*}
$T$ is the sample size, and $SSE_i$ is the sum of squared transformed residuals for series
$i$ over the full sample,
\begin{equation*}
SSE_i
=
\sum_{t=1}^{T}
\left(
\frac{
y_{i,t}
-
\mu_{i,S_t}
-
\sum_{m=1}^{k}\phi_m\bigl(y_{i,t-m}-\mu_{i,S_{t-m}}\bigr)
}
{\sqrt{1+h_{i,2}\mathbf{1}_{\{S_t=1\}}}}
\right)^2 .
\end{equation*}

\paragraph{\textit{(ii) Generate }$\sigma_{i,1}^{2}$.}

To generate $\sigma_{i,1}^{2}$, we condition on $\sigma_{i,2}^{2}$ and again
work with a transformation of Model~A:
\begin{equation*}
\frac{y_{i,t}}{\sigma_{i,2}}
=
\frac{
\mu_{i,S_t}
+
\sum_{m=1}^{k}\phi_m\bigl(y_{i,t-m}-\mu_{i,S_{t-m}}\bigr)
}{\sigma_{i,2}}
+
\frac{e_{i,t}}{\sigma_{i,2}} .
\end{equation*}
In this transformed regression, the error term satisfies
\[
e_{i,t}/\sigma_{i,2}\mid \sigma_{i,2}
\sim
i.i.d.\ N\!\left(0,\,1+h_{i,2}\mathbf{1}_{\{S_t=1\}}\right),
\] which shows that only the observations with $S_t=1$ contain information about $h_{i,2}$

Let $\bar h_{i,2}=1+h_{i,2}$. With the prior
distribution
\begin{equation*}
\bar h_{i,2}\sim \mathrm{IG}\!\left(\frac{d_{i,0}}{2},\frac{v_{i,0}}{2}\right).
\end{equation*}
its posterior distribution is
\begin{equation*}
\bar h_{i,2}\mid \sigma_{i,2}, y_{1:T}
\sim
\mathrm{IG}\!\left(\frac{d_{i,1}}{2},\frac{v_{i,1}}{2}\right),
\end{equation*}
where
\begin{equation*}
d_{i,1}=d_{i,0}+\sum_{t=1}^{T}\mathbf{1}_{\{S_t=1\}},
\qquad
v_{i,1}=v_{i,0}+SSE_{i,S_t=1},
\end{equation*}
with
\[
SSE_{i,S_t=1}
=
\sum_{t:S_t=1}
\left(
\frac{
y_{i,t}
-
\mu_{i,S_t}
-
\sum_{m=1}^{k}\phi_m\bigl(y_{i,t-m}-\mu_{i,S_{t-m}}\bigr)
}{\sigma_{i,2}}
\right)^2 .
\]
Finally, $\sigma_{i,1}^{2}$ is obtained as
\[
\sigma_{i,1}^{2}=\bar h_{i,2}\,\sigma_{i,2}^{2}.
\]
\subsubsection{Sampling the autoregressive coefficients}

Dividing both sides of Model A by $\sigma_{i,S_t}$ yields the transformed regression
\[
\frac{y_{i,t}-\mu_{i,S_t}}{\sigma_{i,S_t}}
=
\sum_{m=1}^{k}\phi_m
\frac{y_{i,t-m}-\mu_{i,S_{t-m}}}{\sigma_{i,S_t}}
+
u_{i,t},
\qquad
u_{i,t}\sim i.i.d.\ N(0,1).
\]
The likelihood is homoskedastic in the transformed system. We stack the transformed regression over $t=k+1,\ldots,T$ and over all series
in a given group. Let $Y$ denote the stacked vector with
entries $(y_{i,t}-\mu_{i,S_t})/\sigma_{i,S_t}$, and let $X$ be the
corresponding stacked design matrix whose $m$th column contains
$(y_{i,t-m}-\mu_{i,S_{t-m}})/\sigma_{i,S_t}$. This yields
\[
Y = X\phi + u,
\qquad
u\sim N(0,I).
\]

With a Gaussian prior $\phi\sim N(m_0,V_0)$, the conditional posterior
distribution is Gaussian,
\[
\phi\mid Y \sim N(m_1,V_1),
\qquad
V_1=(V_0^{-1}+X'X)^{-1},
\quad
m_1=V_1(V_0^{-1}m_0+X'Y).
\]
Stationarity is imposed by drawing from this posterior subject to the usual
root condition for the autoregressive polynomial.

\subsection{Sampling the latent states}

Here we prove that the sampler for the latent states described in the main paper is exact in the sense that it produces draws from the $p\!\left(S_t \mid S_{t+1:T},\,  y_{1:T},\, \theta\right)$ for any $t.$ It is sufficient to verify that the following relationship holds, up to a proportionality constant that does not depend on \(S_t\): \[
p\!\left(S_t \mid S_{t+1:T},\,  y_{1:T},\, \theta\right)
\;\propto\;
p\!\left(S_{t:t+k}\mid  y_{1:t+k},\, \theta\right).
\]
We begin by writing
\begin{align*}
p\!\left(S_t \mid S_{t+1:T},\,  y_{1:T},\, \theta\right)
=
\frac{
p\!\left(S_t,\, S_{t+1:T},\,  y_{1:T}\mid \theta\right)
}{
p\!\left(S_{t+1:T},\,  y_{1:T}\mid \theta\right)
} \propto
p\!\left(S_t,\, S_{t+1:T},\,  y_{1:T}\mid \theta\right),
\end{align*}
since the denominator does not depend on \(S_t\). Below, when \(k=0\), \(S_{t+1:t+k}\) and \(S_{t+k+1:T}\) are interpreted as \(S_{t+1}\) and \(S_{t+2:T}\), respectively. When \(k>0\), the notation is standard. We have
\begin{align*}
p\!\left(S_t,\, S_{t+1:T},\,  y_{1:T}\mid \theta\right)
&=
p\!\left(S_t,\, S_{t+1:t+k},\,  y_{1:t+k}\mid \theta\right) \\
&\quad\times
p\!\left(S_{t+k+1:T},\,  y_{t+k+1:T}
\mid S_t,\, S_{t+1:t+k},\,  y_{1:t+k},\, \theta\right) \\
&\propto
p\!\left(S_t,\, S_{t+1:t+k},\,  y_{1:t+k}\mid \theta\right),
\end{align*}
where the proportionality follows because the second factor does not depend on \(S_t\), since \(S_{t+1:t+k}\) already contains the most recent \(k\) state values. Finally,
\begin{align*}
p\!\left(S_t,\, S_{t+1:t+k},\,  y_{1:t+k}\mid \theta\right)
&=
p\!\left(S_t,\, S_{t+1:t+k}\mid  y_{1:t+k},\, \theta\right)
\, p\!\left( y_{1:t+k}\mid \theta\right) \\
&\propto
p\!\left(S_t,\, S_{t+1:t+k}\mid  y_{1:t+k},\, \theta\right).
\end{align*}
Therefore,
\[
p\!\left(S_t \mid S_{t+1:T},\,  y_{1:T},\, \theta\right)
\;\propto\;
p\!\left(S_t,\, S_{t+1:t+k}\mid  y_{1:t+k},\, \theta\right).
\]

\subsection{Sampling the transition probabilities}
We use independent Beta distribution priors for $(p_{11},p_{22})$:
\begin{equation*}
p_{11}\sim Beta(\alpha _{1},\beta _{1}) \text{ and }p_{22}\sim Beta(\alpha
_{2},\beta _{2}).
\end{equation*}%
Their joint density satisfies 
\begin{equation*}
f(p_{11},p_{22})\propto p_{11}^{\alpha _{1}-1}(1-p_{11})^{\beta
_{1}-1}p_{22}^{\alpha _{2}-1}(1-p_{22})^{\beta _{2}-1}.
\end{equation*}%
The likelihood function for $(p_{11},p_{22})$, given $S_{1:T}$, is 
\begin{equation*}
L(p_{11},p_{22}|S_{1:T})=p_{11}^{n_{11}}(1-p_{11})^{n_{12}}p_{22}^{n_{22}}(1-p_{22})^{n_{21}},
\end{equation*}%
where $n_{ij}$ refers to the number of transitions from state $i$ to state $%
j $ based on $S_{1:T}$. Thus, the posterior distribution of $%
(p_{11},p_{22})$ given $S_{1:T}$ is 
\begin{align*}
f(p_{11},p_{22}|S_{1:T})& \propto f(p_{11},p_{22})L(p_{11},p_{22}|%
S_{1:T}) \\
& =p_{11}^{\alpha _{1}+n_{11}-1}(1-p_{11})^{\beta
_{1}+n_{12}-1}p_{22}^{\alpha _{2}+n_{22}-1}(1-p_{22})^{\beta _{2}+n_{21}-1}.
\end{align*}%
Consequently, 
\begin{equation*}
p_{11}|S_{1:T}\sim Beta(\alpha _{1}+n_{11},\beta _{1}+n_{12}),\text{
and }p_{22}|S_{1:T}\sim Beta(\alpha _{2}+n_{22}, \beta _{2}+n_{21}).
\end{equation*}%
Therefore, given $S_{1:T}$, $p_{11}$ and $p_{22}$ can be
drawn from $Beta(\alpha _{1}+n_{11},\beta _{1}+n_{12})$ and $Beta(\alpha
_{2}+n_{22},\beta _{2}+n_{21})$ separately.

\newpage

\section{Data Sources and Component Structure}
\addcontentsline{toc}{section}{Appendix \thesection. Data Sources and Component Structure}

\paragraph{Sources.}
GDP component series are from FRED, Federal Reserve Bank of St.~Louis (Table 1.5.1, Percent Change From Preceding Period in Real Gross Domestic Product, Expanded Detail; \url{https://fred.stlouisfed.org/release/tables?rid=53&eid=13498}). Industrial production by industry and capacity utilization are from the Federal Reserve System (\url{https://www.federalreserve.gov/releases/g17/Current/default.htm}). Nonfarm payroll employment is from the Bureau of Labor Statistics (\url{https://www.bls.gov/data/}). Average weekly hours and overtime of production and nonsupervisory employees on private nonfarm payrolls are from FRED (Table B-7; \url{https://fred.stlouisfed.org/release/tables?rid=50&eid=5923}). NBER recession indicators are from \url{https://www.nber.org/cycles.htm}.

\paragraph{Frequency conversion.}
Monthly payroll and industrial production data are converted to quarterly frequency by simple averaging; other series are available quarterly.

\paragraph{Component structure.}
The structure of the GDP components is summarized in Figure~\ref{fig:gdp_dis}, and the components of the four business cycle indicators in Figures~\ref{fig:data_structure1} and~\ref{fig:data_structure2}, where the disaggregated variables used for estimation are shown in italics in blue.

\paragraph{Real GDP.}
Real GDP (excluding government) is decomposed into four main components: personal consumption expenditures, gross private domestic investment, exports, and imports. Personal consumption is split into goods, with eight disaggregate series covering durable and nondurable goods, and services, with eight series covering household consumption and NPISHs. Gross private domestic investment covers fixed investment, split into nonresidential (comprising structures, equipment with five series, and intellectual property products with three series) and residential. Exports and imports are each split into goods and services.

\paragraph{Industrial production.}
Industrial production is categorized by industry, with 21 components: durable goods manufacturing (11 series), nondurable goods manufacturing (8 series), mining, and utilities. The base year is 2012.

\paragraph{Capacity utilization.}
Capacity utilization, expressed as a percentage, is classified by industry. We consider 14 disaggregate series: wood product; nonmetallic mineral product; primary metal; fabricated metal product; machinery; computer and electronic product; electrical equipment, appliance, and component; transportation equipment; furniture and related product; miscellaneous; nondurable manufacturing; other manufacturing; mining; and electric and gas utilities.

\paragraph{Nonfarm payroll employment.}
We consider the classification by sector, with 15 disaggregate series: mining and logging; construction; durable goods; nondurable goods; wholesale trade; retail trade; transportation; utilities; information; financial activities; professional and business services; education and health services; leisure and hospitality; other services; and government. The unit is thousands of employees.

\paragraph{Average weekly hours.}
Hours and overtime of production and nonsupervisory employees are also classified by sector. We consider 14 disaggregate series: mining and logging; construction; durable manufacturing; nondurable manufacturing; trade (wholesale and retail); information; financial activities; professional and business services; education and health services; leisure and hospitality; transportation; utilities; and other services.

\section{Figures and Diagrams}
This section contains a figure for smoothed US recession probabilities from FRED and diagrams that further illustrate the structure of the disaggregated dataset.
\begin{figure}[H]
\caption{Components of Real GDP}
\vspace{3mm}
\includegraphics[width=1.9\linewidth]{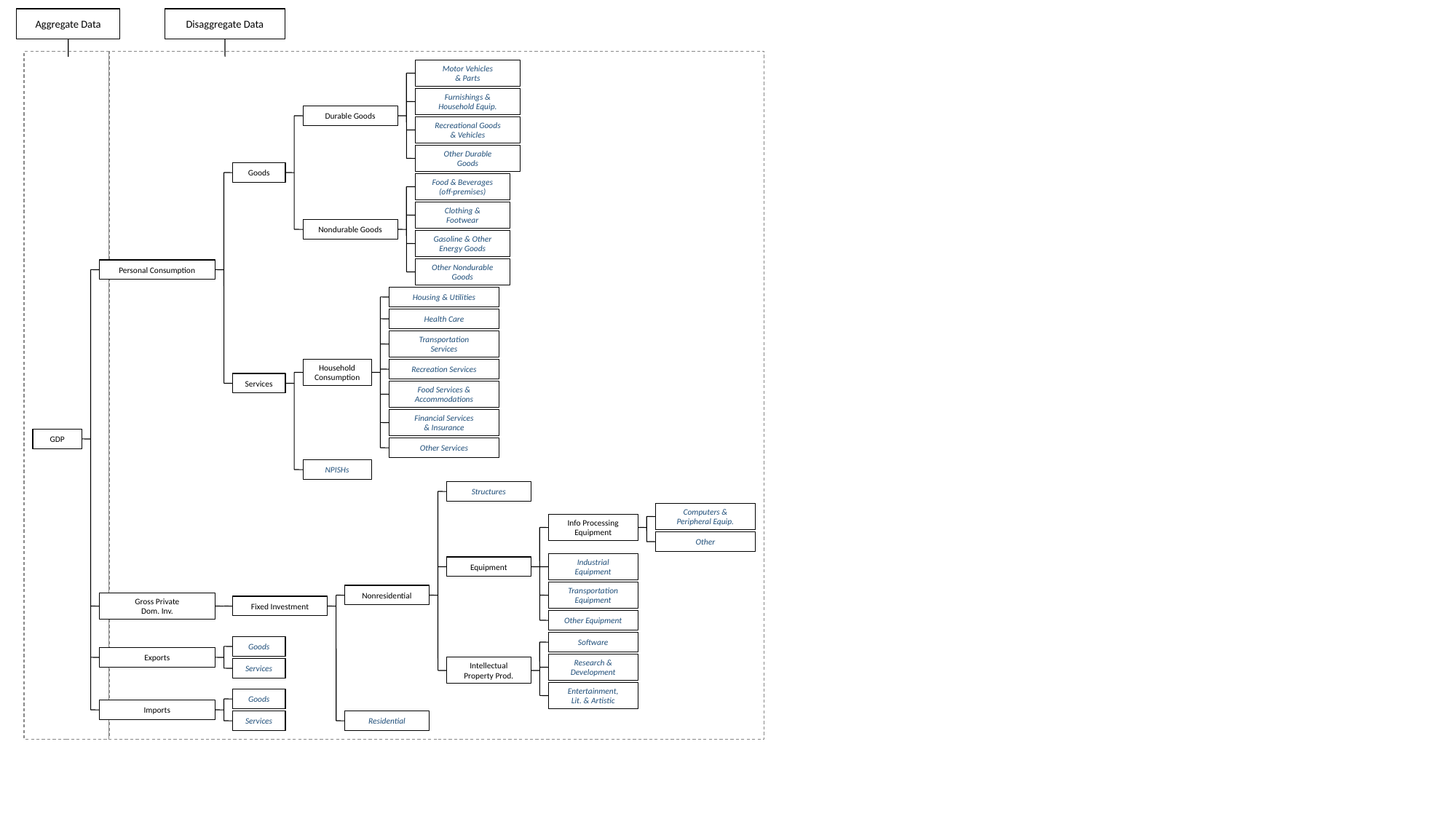}
\vspace{-2mm}
\begin{minipage}{1.0\linewidth}
\small\textit{Note:} The figure displays the GDP components excluding government spending. Components highlighted in light blue are those included in the estimation.
\end{minipage}
\label{fig:gdp_dis}
\end{figure}
\newpage
\begin{figure}[H]
\caption{Components of Four Economic Indicators: IP}
\vspace{1cm}
\hspace{2cm}
\includegraphics[width=1.6\linewidth, trim=0cm 3cm 10cm 0cm, clip]{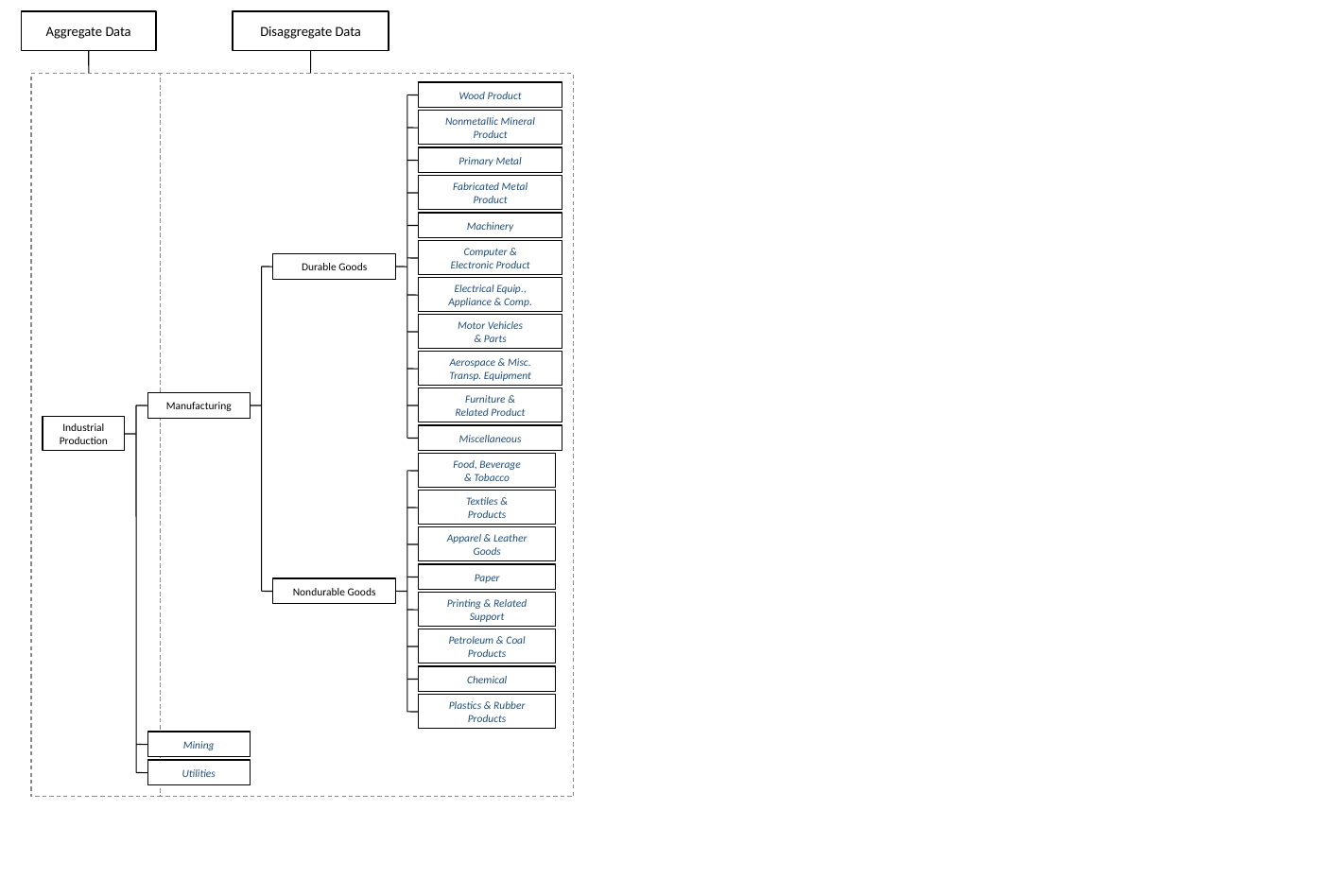}
\label{fig:data_structure1}
\begin{minipage}{1.0\linewidth}
\singlespacing
\small\textit{Note:} The figure displays the components of industrial production, one of four sets of economic indicators used in the analysis. Components highlighted in light blue are those included in the estimation.
\end{minipage}
\end{figure}

\newpage
\begin{figure}[H]
\caption{Components of Four Economic Indicators (cont'd)}
\hspace{1cm}
\includegraphics[width=1.6\linewidth, trim=0cm 0cm 9cm 0cm, clip]{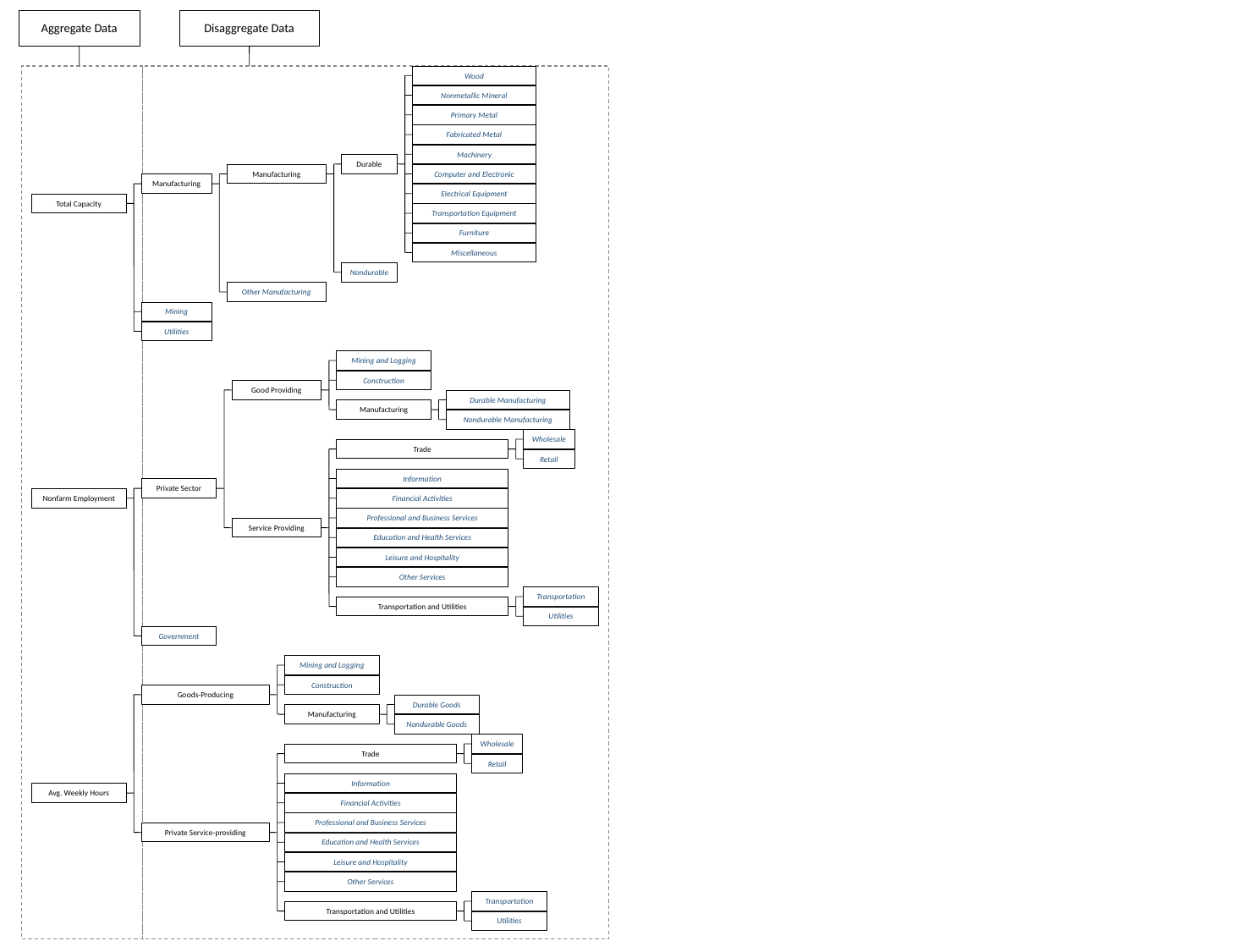}
\label{fig:data_structure2}
\begin{minipage}{1.0\linewidth}
\singlespacing
\small\textit{Note:} The figure displays the components of three economic indicators used in the analysis. Components highlighted in light blue are those included in the estimation.
\end{minipage}
\end{figure}


\begin{figure}[H]
    \centering

    \caption{Smoothed U.S. Recession Probability (FRED)}
    \label{fig:Fred_recession}

    \includegraphics[width=0.95\textwidth, clip, trim=0cm 9cm 0cm 9.5cm]{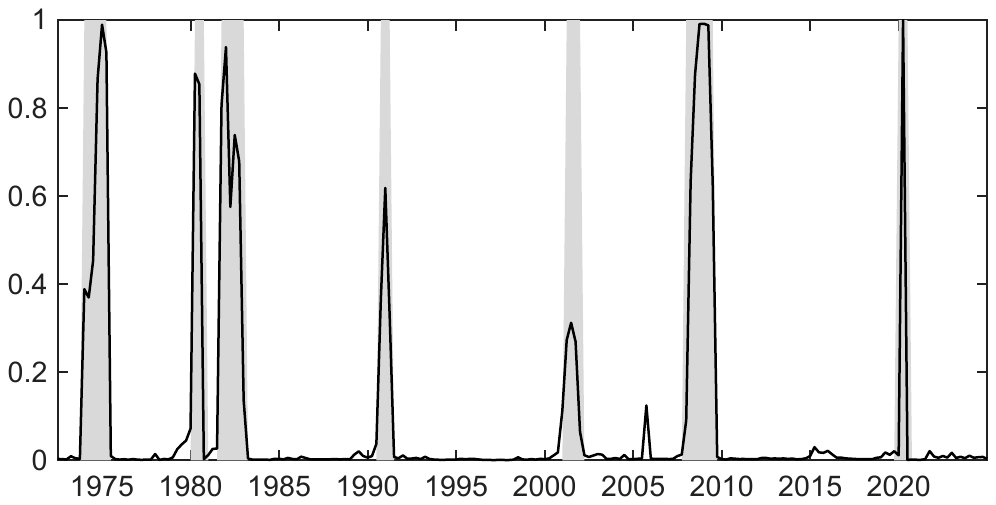}

    \vspace{0.3em}
  
\vspace{-2mm}
\begin{minipage}{0.9\linewidth}
\singlespacing
\small\textit{Source:} Chauvet, Marcelle and Piger, Jeremy Max,
    Smoothed U.S. Recession Probabilities [RECPROUSM156N], retrieved from FRED,
    Federal Reserve Bank of St. Louis;
    \url{https://fred.stlouisfed.org/series/RECPROUSM156N}. Smoothed recession probabilities are obtained from a dynamic-factor markov-switching model applied to four monthly coincident variables: non-farm payroll employment, the index of industrial production, real personal income excluding transfer payments, and real manufacturing and trade sales. This model was originally developed in Chauvet (1998).\end{minipage}
\end{figure}


\end{document}